\newtheorem{theorem}{Theorem}[section]
\newtheorem{lemma}[theorem]{Lemma}
\newtheorem{proposition}[theorem]{Proposition}
\theoremstyle{remark}
\newtheorem{remark}[theorem]{Remark}
\theoremstyle{definition}
\newtheorem{definition}[theorem]{Definition}
\theoremstyle{example}
\newtheorem{example}[theorem]{Example}
\theoremstyle{notation}
\newcommand{\bra}[1]{\langle#1|}
\newcommand{\ket}[1]{|#1\rangle}
\begin{document}

\title{Renormalization of total sets of states into generalized bases with a resolution of the identity}            
\author{A. Vourdas}
\affiliation{Department of Computer Science,\\
University of Bradford, \\
Bradford BD7 1DP, United Kingdom\\a.vourdas@bradford.ac.uk}

\begin{abstract}
A total set of states for which we have no resolution of the identity (a `pre-basis'), is considered in a finite dimensional Hilbert space.
A dressing formalism renormalizes them into density matrices which resolve the identity, and makes them 
a `generalized basis', which is practically useful. 
The dresssing mechanism is inspired by Shapley's methodology in cooperative game theory, and it uses M\"obius transforms.
There is non-independence and redundancy in these generalized bases,
which is quantified with a Shannon type of entropy.
Due to this redundancy, calculations based on generalized bases, are sensitive to physical changes and robust in the presence of noise.
For example, the representation of an arbitrary vector in such generalized bases, is robust when noise is inserted in the coefficients.
Also in a physical system with ground state which changes abruptly at some value of the coupling constant,
the proposed methodology detects such changes, even when noise is added to the parameters in the Hamiltonian of the system.
\end{abstract}
\maketitle
\section{Introduction}
Redundancy is important for error correction.
Without redundancy in our language (quantified by Shannon \cite{S} and later by many others)
we would not be able to communicate because a minor spelling mistake would change completely the meaning.
The analogue of this in the context of Hilbert spaces is that calculations based on orthonormal bases are sensitive to noise.
In contrast, calculations based on total (or overcomplete) sets of vectors that can be used as generalized bases, are 
much less sensitive to noise.
A set $\Sigma$ of vectors is called total, if there is no vector in the Hilbert space which is orthogonal to all vectors in $\Sigma$.

A total set of vectors can be used as a generalized basis, only if there is a resolution of the identity in terms 
of them, which can be used to expand an arbitrary vector in terms of the vectors in the total set.
In the present paper we consider a $d$-dimensional Hilbert space $H_d$, and an arbitrary total set of $n> d$ vectors (for which 
in general we have no resolution of the identity).
We renormalize them into a set of $n$ mixed states (density marices), that resolve the identity.

The renormalization formalism is analogous to the Shapley methodology in cooperative game theory\cite{G1,G2,G3,G4}.
In a recent paper \cite{V2017} we have used this methodology mainly with the set of $n=d^2$ coherent states (which is a special case of a total set), and we only discussed briefly the application of the formalism to an arbitrary total set.
In the present paper we expand the use of the formalism with an arbitrary total set, as follows:
\begin{itemize}
\item
The formalism is presented directly in a quantum context.
The analogy between the  Shapley methodology in cooperative game theory and our approach in a quantum context, has been discussed in detail in 
\cite{V2017} and is not discussed here.
We note that cooperative game theory uses scalar quantities, while quantum mechanics uses matrices. 

\item
The formalism leads to $n> d$ density matrices $\sigma (i)$, which resolve the identity and which can be used as a generalized basis.
The term `generalized basis', reflects:
\begin{itemize}
\item
the fact that it consists of density matrices (i.e., vectors with probabilities attached to them)
\item
their non-independence (the number of them is greater than the dimension of the space).
The non-independence and redundancy in this generalized basis, is quantified with a Shannon type of entropy which is shown to take values in the 
interval $(\log n-\log d, \log n)$. The merit of having this redundancy, is that it
makes calculations with generalized bases, sensitive to physical changes and immune to noise.
\end{itemize}
\item
Coherent states are uniformly distributed in phase space, and the corresponding renormalized density matrices $\sigma (i)$ (studied in  \cite{V2017}) have 
strong properties related to coherence, e.g., they are related to each other through displacement transformations.
However from a practical point of view, for large dimension $d$ 
the calculation is tedious (in a $d$-dimensional space, there are $d^2$ coherent states, which lead to $d^2$ renormalized density matrices $\sigma (i)$).
The formalism discussed in this paper is general, and we can take $n$ slightly larger than $d$, so that we have the merits of redundancy with fewer 
renormalized density matrices $\sigma (i)$, and a simpler calculation.
Of course, in this general case the renormalized density matrices $\sigma (i)$, have weaker properties than in the case of coherent states.

\item
The emphasis in this paper is in the applications of the formalism, as follows:
\begin{itemize}
\item
It is shown that the representation of a vector in our generalized bases is robust in the presence of noise, in the sense that 
addition of random numbers in the coefficients does not change the vector significantly.
\item
The formalism is applied to the study of the ground state 
(i.e., the eigenstate corresponding to the lowest eigenvalue),
of physical systems. We consider a system in which the ground state changes 
abruptly at some value of the coupling constant.
We show that our generalized bases can detect such changes even in the presence of noise.
In large (ideally infinite) systems, such an abrupt change of the ground state is associated with a phase transition.
\end{itemize}
\end{itemize}

The whole area of coherent states, POVMs (positive operator valued measures) and 
frames and wavelets (e.g., \cite{C1,C2,C3}), are a kind of generalized bases, 
and calculations that use them are robust in the presence of noise, due to redundancy.
An arbitrary state can be expanded in terms of coherent states or POVMs, because of a resolution of the identity.
In frames we have no exact resolution of the identity, but we have lower and upper bounds to it.
In our case we start from a total set of $n> d$ states, which we renormalize and we get $n$ density matrices $\{\sigma (i)\}$, 
that resolve the identity.
They can be used as a generalized basis, which is robust in the presence of noise.

In section 2 we define various quantities and explain the notation. In section 3, we present briefly the M\"obius transforms.
We discussed them in a different context in \cite{V1,V2} and here  we only give briefly the relevant formulas, together with a new proposition 
on the trace of these operators (proposition \ref{b135}),  which is used later.

In section 4, we show how to renormalize a total set of vectors into a generalized basis, which resolves the identity.
The starting point is a resolution of the identity that contains projectors associated with the vectors in the total set, and M\"obius operators.
Using an approach inspired by the  Shapley methodology in cooperative game theory, we assign the M\"obius operators to the projectors, and convert them into density matrices that resolve the identity.

In section 5, the redundancy in the generalized bases, is quantified with a Shannon type of entropy.
In section 6, we use the generalized bases, to represent a vector in $H_d$, with $n$ components. 
We then add noise to these components, and reconstruct the original vector.
It is shown that the error in this reconstruction, is smaller in the case of our generalized bases, than in the case of orthonormal bases.

In section 7, we consider a physical system with two-dimensional Hilbert space and with Hamiltonian $\theta (\lambda)$, 
 whose ground state changes abruptly at some value of the coupling constant $\lambda$. 
 Such a system is often used as an approximation to an infinite-dimensional system, 
which due to low energy, operates in the subspace of the lowest two states.
Many of the experimentally available qubits are of this type (e.g., the superconducting qubits).
 We define the concept of location index ${\cal L}[\theta (\lambda)]$ of $\theta (\lambda) $, 
with respect to a generalized basis $\{\sigma (i)\}$. 
We then define comonotonicity intervals of the coupling parameter $\lambda$, within which the location index ${\cal L}[\theta (\lambda)]$
remains constant, and associate them with mild changes in the physical system. 
Crossing points from one comonotonicity interval to another, indicate a possible drastic change in the ground state of the system.
We show that the method works well, even when we add noise in the parameters of the Hamiltonian.

 We conclude in section 8, with a discussion of our results.
   
\section{Preliminaries}
We consider a $d$-dimensional Hilbert space $H_d$, and an orthonormal basis of `position states' which we denote as
$\ket {X;\alpha}$. Here $a \in {\mathbb Z}(d)$ (the integers modulo $d$), and the $X$ in the notation is not a variable but it simply indicates position states.  
The Fourier transform is defined as
\begin{eqnarray}
F=\frac{1}{\sqrt{d}}\sum \omega (\alpha \beta)\ket {X;\alpha}\bra {X;\beta};\;\;\;\omega(\alpha)=\exp \left(i\frac{2\pi \alpha}{d}\right).
\end{eqnarray}
\begin{definition}
A `pre-basis' in the $d$-dimensional Hilbert space $H_d$, is a set of $n> d$ states 
\begin{eqnarray}\label{sig}
\Sigma =\{\ket{i}\;|\;i\in \Omega\};\;\;\;\Omega=\{1,...,n\}
\end{eqnarray}
such that:
\begin{itemize}
\item
Any subset of $d$ of these states, are linearly independent. 
\item
$\Sigma$ and also any of its subsets with $r\ge d$ of these states, are total sets.
\item
In general, we have no resolution of the identity in terms of these $n$ states.
\end{itemize}
\end{definition}
We call the
\begin{eqnarray}\label{red}
{\cal R}=\frac{n-d}{d}> 0,
\end{eqnarray}
redundancy index.
For coherent states ${\cal R}=d-1$, and for large $d$ this is a large redundancy.
The formalism in this paper is general, but from a practical point of view it should be used with 
positive but small values of  ${\cal R}$.

Let $H(A)=H(i_1,...,i_r)$ be the subspace of $H_d$ spanned by the states $\ket{i_1},...\ket{i_r}$:
\begin{eqnarray}
H(A)=H(i_1,...,i_r)={\rm span}\{\ket{i_1},...\ket{i_r}\};\;\;\;A=\{i_1,...,i_r\}\subseteq \Omega.
\end{eqnarray}
If $r<d$ then $H(A)$ is an $r$-dimensional subspace of $H_d$.
If $r\ge d$, then $H(A)=H_d$.
We call $\Pi(A)=\Pi(i_1,...,i_r)$ the projector to the subspace $H(A)$.
In general
\begin{eqnarray}\label{3xc}
\Pi(i_1,...,i_r)\ne \Pi(i_1)+...+\Pi(i_r).
\end{eqnarray}
Only if the kets $\ket{i_1},...\ket{i_r}$ are orthogonal to each other, we get equality in this equation.
Also, in general there is no constant $\mu$ such that 
\begin{eqnarray}\label{coh}
\mu [\Pi(i_1)+...+\Pi(i_r)]\ne {\bf 1}.
\end{eqnarray}
In special cases (e.g., with the total set of $n=d^2$ coherent states), we might get equality in Eq.(\ref{coh}).

Cooperative game theory renormalizes the individual contribution of a player, by adding his contribution to aggregations of players.
Similarly, we renormalize $\Pi(i)=\ket{i}\bra{i}$ by adding to it the contributions of the state $i$, to aggregations of states described with projectors $\Pi(A)$ where $i\in A$. 

\section{M\"obius transforms}\label{IVA}
M\"obius transforms have been introduced by Rota\cite{R,R1}. They are a generalization of the 
`inclusion-exclusion\rq{} principle in set theory, which gives the cardinality of the union of overlapping sets.
M\"obius transforms find the overlaps between sets, and thus avoid the double-counting.
Rota generalized them to partially ordered structures, and here we use them with projectors to Hilbert spaces. 

In refs\cite{V1,V2} we have discussed M\"obius transforms in a different context,
and in this section we only give briefly the relevant formulas. 
The M\"obius transform of the coherent projectors $\Pi(A)$, is given by:
\begin{eqnarray}\label{m11}
{\mathfrak D} (B)=\sum _{A\subseteq B} (-1)^{|A|-|B|}\Pi(A);\;\;\;\;A,B\subseteq\Omega.
\end{eqnarray}
The inverse M\"obius transform is
\begin{eqnarray}\label{m13}
\Pi (A)=\sum _{B\subseteq A}{\mathfrak D} (B).
\end{eqnarray}
Some examples are:
\begin{eqnarray}\label{m12}
&&{\mathfrak D} (1)=\Pi(1);\;\;\;{\mathfrak D} (1,2)=\Pi(1,2)-\Pi(1)-\Pi(2)\nonumber\\
&&{\mathfrak D} (1,2,3)=\Pi(1,2,3)-\Pi(1,2)-\Pi(1,3)-\Pi(2,3)+\Pi(1)+\Pi(2)+\Pi(3),
\end{eqnarray}
and then
\begin{eqnarray}
&&\Pi(1,2)={\mathfrak D} (1,2)+{\mathfrak D} (1)+{\mathfrak D} (2)\nonumber\\
&&\Pi(1,2,3)={\mathfrak D} (1,2,3)+{\mathfrak D} (1,2)+{\mathfrak D} (1,3)+{\mathfrak D} (2,3)
+{\mathfrak D} (1)+{\mathfrak D} (2)+{\mathfrak D} (3).
\end{eqnarray}
We note that if the total set $\Sigma$ consists of an orthonormal set of $d$ states,
then all the ${\mathfrak D} (B)$ with $|B|\ge 2$, are zero.

\begin{proposition}\label{b135}
The trace of ${\mathfrak D} (B)$ is given by
\begin{eqnarray}
&&{\rm Tr}[{\mathfrak D} (B)]=1;\;\;{\rm if}\;\;|B|=1\nonumber\\
&&{\rm Tr}[{\mathfrak D} (B)]=0;\;\;{\rm if}\;\;2\le |B|\le d\nonumber\\
&&{\rm Tr}[{\mathfrak D} (B)]=(-1)^{d-|B|}\begin{pmatrix}
|B|-2\\
d-1\\
\end{pmatrix};\;\;{\rm if}\;\;|B|\ge d+1.
\end{eqnarray}
\end{proposition}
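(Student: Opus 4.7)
My plan is to reduce the trace calculation to a finite-difference identity on a simple scalar function, and then read off the three cases by examining where the relevant delta functions are supported. The pre-basis hypothesis (any $d$ of the states are linearly independent, and any $r\ge d$ of them form a total set) implies $\dim H(A)=\min(|A|,d)$, hence $\mathrm{Tr}\,\Pi(A)=\min(|A|,d)$. Substituting into Eq.~(\ref{m11}) and grouping by $|A|=a$, with $b=|B|$ and $f(a)=\min(a,d)$, I would write
\[
\mathrm{Tr}[\mathfrak{D}(B)]=\sum_{a=0}^{b}\binom{b}{a}(-1)^{b-a}f(a)=(\Delta^{b}f)(0),
\]
where $\Delta$ denotes the forward finite-difference operator, $(\Delta g)(a)=g(a+1)-g(a)$. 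I would use this representation for the rest of the argument.

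Next I would iterate $\Delta$ on the piecewise-linear function $f(a)=\min(a,d)$. A one-line computation gives $(\Delta f)(a)=1$ for $0\le a\le d-1$ and $0$ for $a\ge d$, so $(\Delta^{2}f)(a)=-\delta_{a,d-1}$ (Kronecker delta). Consequently, for $b\ge 2$,
\[
(\Delta^{b}f)(0)=-(\Delta^{b-2}\delta_{d-1})(0).
\]
I would then invoke the elementary identity $(\Delta^{k}\delta_{m})(0)=(-1)^{k-m}\binom{k}{m}$ if $0\le m\le k$ and $0$ otherwise, which is just the expansion of $\Delta^{k}$ as an alternating binomial sum applied to the indicator of the point $m$.

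Setting $k=b-2$ and $m=d-1$, the support condition $d-1\le b-2$ forces the trace to vanish in the range $2\le b\le d$, and yields $-(-1)^{(b-2)-(d-1)}\binom{b-2}{d-1}=(-1)^{d-b}\binom{b-2}{d-1}$ for $b\ge d+1$. The case $b=1$ is handled separately: either directly, since $\mathfrak{D}(\{i\})=\Pi(\{i\})$ is a rank-one projector, or from $(\Delta f)(0)=f(1)-f(0)=1$. I do not anticipate any deep obstacle; the only real care needed is in the bookkeeping of signs, in particular identifying the M\"obius alternating sum with $(\Delta^{b}f)(0)$ and tracking the range-of-validity constraint $0\le m\le k$ in the delta-difference identity, which is precisely what selects between the vanishing regime and the nontrivial regime in the statement.
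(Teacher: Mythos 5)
Your proof is correct, and it takes a genuinely different route from the paper's. The paper starts from the same fact, ${\rm Tr}[\Pi(A)]=\min(|A|,d)$, but then evaluates the alternating sum directly: it splits at $k=d$, uses $k\binom{|B|}{k}=|B|\binom{|B|-1}{k-1}$ and the partial alternating-sum identity (Gradshteyn--Ryzhik 0.151.4) on each piece, and recombines with some sign bookkeeping to recover the three cases separately. You instead recognize ${\rm Tr}[{\mathfrak D}(B)]=(\Delta^{|B|}f)(0)$ with $f(a)=\min(a,d)$, observe that $\Delta^2 f=-\delta_{d-1}$, and reduce everything to $(\Delta^{k}\delta_{m})(0)=(-1)^{k-m}\binom{k}{m}$ for $0\le m\le k$ (zero otherwise). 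This handles all three regimes in one stroke: the support condition $d-1\le |B|-2$ is precisely what separates the vanishing case $2\le|B|\le d$ from the case $|B|\ge d+1$, and the sign works out to $(-1)^{d-|B|}\binom{|B|-2}{d-1}$ as required (the $|B|=1$ case is immediate either way). Your version is more elementary — it needs only the binomial expansion of $\Delta^{k}$ rather than tabulated partial-sum identities — and arguably more illuminating, since the second difference of $\min(a,d)$ makes transparent why the trace is supported exactly on $|B|=1$ and $|B|\ge d+1$. Including $a=0$ in your sum is harmless since $f(0)=0$ (equivalently $\Pi(\emptyset)=0$), and the appeal to the pre-basis hypothesis for $\dim H(A)=\min(|A|,d)$ matches the paper's first step, so no gap remains.
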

\begin{proof}
We first point out that
\begin{eqnarray}
&&{\rm Tr}[\Pi(A)]=|A|;\;\;{\rm if}\;\;|A|\le d\nonumber\\
&&{\rm Tr}[\Pi(A)]=d;\;\;{\rm if}\;\;|A|> d
\end{eqnarray}
In the sum of Eq.(\ref{m11}), there are 
$\begin{pmatrix}
|B|\\
k\\
\end{pmatrix}$
sets $A$ with the same cardinality $|A|=k$. Therefore if $2\le |B|\le d$ we get \cite{GR}
\begin{eqnarray}
{\rm Tr}[{\mathfrak D} (B)]&=&(-1)^{-|B|}\sum _{k=1}^{|B|}(-1)^{k}\begin{pmatrix}
|B|\\
k\\
\end{pmatrix}k
=(-1)^{-|B|+1}|B|\sum _{k=1}^{|B|}(-1)^{k-1}\begin{pmatrix}
|B|-1\\
k-1\\
\end{pmatrix}\nonumber\\&=&
(-1)^{-|B|+1}|B|\sum _{k=0}^{|B|-1}(-1)^{k}\begin{pmatrix}
|B|-1\\
k\\
\end{pmatrix}=
(1-1)^{|B|-1}=
0.
\end{eqnarray}
In the case $|B|\ge d+1$ we get
\begin{eqnarray}
{\rm Tr}[{\mathfrak D} (B)]=(-1)^{-|B|}\sum _{k=1}^{d}(-1)^{k}\begin{pmatrix}
|B|\\
k\\
\end{pmatrix}k+d
(-1)^{-|B|}\sum _{k=d+1}^{|B|-1}(-1)^{k}\begin{pmatrix}
|B|\\
k\\
\end{pmatrix}
+d
\end{eqnarray}
But
\begin{eqnarray}
(-1)^{-|B|}\sum _{k=1}^{d}(-1)^{k}\begin{pmatrix}
|B|\\
k\\
\end{pmatrix}k=
(-1)^{-|B|}|B|\sum _{k=1}^{d}(-1)^{k}\begin{pmatrix}
|B|-1\\
k-1\\
\end{pmatrix}=
(-1)^{d-|B|}|B|\begin{pmatrix}
|B|-2\\
d-1\\
\end{pmatrix}
\end{eqnarray}
Also (use formula 0.151.4 in \cite{GR})
\begin{eqnarray}
\sum _{k=0}^{|B|-1}(-1)^{k}\begin{pmatrix}
|B|\\
k\\
\end{pmatrix}=(-1)^{|B|-1};\;\;\;\;
\sum _{k=0}^{d}(-1)^{n}\begin{pmatrix}
|B|\\
k\\
\end{pmatrix}=
(-1)^{d}\begin{pmatrix}
|B|-1\\
d\\
\end{pmatrix}
\end{eqnarray}
Combining these results we prove that
\begin{eqnarray}
d(-1)^{-|B|}\sum _{k=d+1}^{|B|-1}(-1)^{k}\begin{pmatrix}
|B|\\
k\\
\end{pmatrix}=-d-
d(-1)^{d-|B|}\begin{pmatrix}
|B|-1\\
d\\
\end{pmatrix}=
-d-(-1)^{d-|B|}(|B|-1)\begin{pmatrix}
|B|-2\\
d-1\\
\end{pmatrix}
\end{eqnarray}
and then prove the last relation in the proposition.
\end{proof}

M\"obius transforms are intimately related to commutators that involve the projectors, e.g.\cite{V1,V2},
\begin{eqnarray}\label{375}
&&[\Pi(i),\Pi(j)]={\mathfrak D} (i,j)[\Pi(i)-\Pi(j)]\nonumber\\
&&[[\Pi(i), \Pi(k)], \Pi(j)]=\Pi(j){\mathfrak D} (i,j,k)[\Pi(i)-\Pi(k)]+[\Pi(i)-\Pi(k)]{\mathfrak D} (i,j,k)\Pi(j).
\end{eqnarray}
Working with M\"obius operators is equivalent to taking into account the non-commutativity of the projectors $\Pi (i)$.

\section{Renormalization of a pre-basis into a generalized basis}\label{gen}

\begin{definition}
A generalized basis in $H_d$ is a set of $n> d$ density matrices $\{\sigma (i)\}$ which obey the relation
\begin{eqnarray}
\sum \sigma (i)=\lambda {\bf 1},
\end{eqnarray}
where $\lambda$, is a constant.
\end{definition}

In this section we renormalize an arbitrary pre-basis into a generalized basis, using M\"obius transformations.
If $A$ in Eq.(\ref{m13}) is the total set $\Omega$ of Eq.(\ref{sig}), then $\Pi(\Omega)={\bf 1}$ and we get
\begin{eqnarray}\label{m134}
\sum _{B\subseteq \Omega}{\mathfrak D} (B)=\sum _{i\in \Omega}\Pi(i)+\sum _{i,j}{\mathfrak D} (i,j)+\sum _{i,j,k}{\mathfrak D} (i,j,k)+...={\bf 1}.
\end{eqnarray}
This is a resolution of the identity that involves not only the projectors $\Pi(i)$, but also the M\"obius operators
${\mathfrak D} (i,j)$, ${\mathfrak D} (i,j,k)$, etc.
The ${\mathfrak D} (i,j)=\Pi(i,j)-\Pi(i)-\Pi(j)$ `belongs' to both states $i,j$, and Eq.(\ref{375}) shows that it is related to the 
commutator $[\Pi(i),\Pi(j)]$.
We divide this `joint property' equally to all its `owners\rq{}: half of it to $i$ and the other half to $j$.
Similarly, the  ${\mathfrak D} (i,j,k)$ `belongs' to the states labeled with $i,j,k$, and we allocate a third of it to each of these three states; etc.
So we resolve the identity in Eq.(\ref{m134}) as
\begin{eqnarray}\label{m135}
\sum _{i\in \Omega}\tau (i)={\bf 1};\;\;\;
\tau (i)=\sum  _{B\subseteq \Omega \setminus \{i\}}\frac{{\mathfrak D} (B\cup \{i\})}{|B\cup \{i\}|}=\Pi(i)+\frac{1}{2}\sum _{j}{\mathfrak D} (i,j)+\frac{1}{3}\sum _{j,k}{\mathfrak D} (i,j,k)+...
\end{eqnarray}
In $\tau(i)$ the summations are over all aggregations that involve the state $i$.
We will show that the $\tau(i)$ with appropriate normalization are density matrices.

The following lemma expresses $\tau(i)$ as a sum of projectors, and will be used below to prove that 
the $\tau (i)$ are positive semidefinite operators.
It has been proved indirectly in ref\cite{V2017}, through analogy with similar results in cooperative game theory.
Below we give a direct combinatorial proof.
\begin{lemma}
Let $\varpi(i|A)$ be the projectors
\begin{eqnarray}
\varpi(i|A)=\Pi (\{i\}\cup A)-\Pi(A);\;\;\;A\subseteq \Omega \setminus  \{i\}.
\end{eqnarray}
The $\tau (i)$ can be expressed as
\begin{eqnarray}\label{m130}
\tau (i)=\frac{1}{n}\sum _{A\subseteq \Omega \setminus \{i\}}\begin{pmatrix}n-1\\|A|\\\end{pmatrix}^{-1}\varpi (i|A)
\end{eqnarray}
\end{lemma}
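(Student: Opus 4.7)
The plan is to reduce the identity to a manipulation in the M\"obius basis, where both sides are expressed as linear combinations of $\mathfrak{D}(S)$ with $i \in S$, and then to match coefficients via a short combinatorial identity.

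First I would use the inverse M\"obius transform (Eq.~(\ref{m13})) to rewrite the projectors in $\varpi(i|A)$:
\begin{eqnarray}
\varpi(i|A)=\Pi(A\cup\{i\})-\Pi(A)=\sum_{C\subseteq A\cup\{i\}}\mathfrak{D}(C)-\sum_{C\subseteq A}\mathfrak{D}(C)=\sum_{\substack{C\subseteq A\cup\{i\}\\ i\in C}}\mathfrak{D}(C).
\end{eqnarray}
Substituting into the right hand side of (\ref{m130}) and exchanging the order of summation, the coefficient of a fixed $\mathfrak{D}(S)$ (with $i\in S$, $|S|=k$) becomes
\begin{eqnarray}
\frac{1}{n}\sum_{\substack{A\subseteq \Omega\setminus\{i\}\\ S\setminus\{i\}\subseteq A}}\binom{n-1}{|A|}^{-1}.
\end{eqnarray}
Grouping the sets $A$ by cardinality $a=|A|$ and writing $j=a-(k-1)$, this coefficient equals $\frac{1}{n}\sum_{j=0}^{n-k}\binom{n-k}{j}\binom{n-1}{j+k-1}^{-1}$. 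On the other hand, the definition of $\tau(i)$ in (\ref{m135}) gives the coefficient of $\mathfrak{D}(S)$ directly as $\frac{1}{k}$.

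The remaining combinatorial identity to establish is therefore
\begin{eqnarray}
\sum_{j=0}^{n-k}\binom{n-k}{j}\binom{n-1}{j+k-1}^{-1}=\frac{n}{k}.
\end{eqnarray}
I expect this to be the main technical obstacle, but it admits a clean one-line proof via the Beta integral. Using $\binom{n-1}{s}^{-1}=n\int_0^1 x^{s}(1-x)^{n-1-s}\,dx$ with $s=j+k-1$, the sum becomes
\begin{eqnarray}
n\int_0^1 x^{k-1}\sum_{j=0}^{n-k}\binom{n-k}{j}x^{j}(1-x)^{n-k-j}\,dx=n\int_0^1 x^{k-1}\,dx=\frac{n}{k},
\end{eqnarray}
where the binomial theorem collapses the inner sum to $1$.

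Collecting the coefficients of each M\"obius operator $\mathfrak{D}(S)$ with $i\in S$ then yields $\tau(i)$ as defined in (\ref{m135}), completing the proof. The only place where genuine work is needed is the Beta-integral identity; everything else is bookkeeping with the inverse M\"obius transform and a change in the order of summation.
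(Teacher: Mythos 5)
Your proof is correct, but it runs the computation in the opposite direction from the paper's. The paper starts from the definition of $\tau(i)$ in Eq.~(\ref{m135}) as a weighted sum of M\"obius operators, expands each ${\mathfrak D}(B\cup\{i\})$ back into projectors via Eq.~(\ref{m11}), and counts the occurrences of a fixed $\Pi(A)$ and $\Pi(A\cup\{i\})$; this produces the alternating sum $\sum_k(-1)^k\binom{n-1-|A|}{k}\frac{1}{|A|+k+1}$, which is evaluated by a quoted combinatorial identity to give $\mp\frac{1}{n}\binom{n-1}{|A|}^{-1}$, i.e.\ exactly the $\varpi(i|A)$ combination of Eq.~(\ref{m130}). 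You instead take the claimed right-hand side, use the inverse M\"obius transform Eq.~(\ref{m13}) to write $\varpi(i|A)=\sum_{C\subseteq A\cup\{i\},\,i\in C}{\mathfrak D}(C)$, and match the resulting coefficient of each ${\mathfrak D}(S)$ with $i\in S$ against the $1/|S|$ in the definition; the identity you need, $\sum_j\binom{n-k}{j}\binom{n-1}{j+k-1}^{-1}=n/k$, is sign-free and your Beta-integral evaluation of it is complete and correct (and no linear independence of the ${\mathfrak D}(S)$ is needed, since you obtain a literal chain of operator identities terminating at the defining expression for $\tau(i)$). What your route buys is a cleaner, cancellation-free identity with a self-contained one-line proof, rather than an alternating sum quoted from a table; what the paper's route buys is that it lands directly on the projector-side form of $\tau(i)$, which is the form used immediately afterwards to conclude positive semidefiniteness. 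Both arguments are sound, and yours is a legitimate alternative of essentially the same coefficient-matching type.
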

\begin{proof}
We count the number of projectors $\Pi(A)$ with $A\subseteq \Omega \setminus  \{i\}$, in the 
right hand side of Eq.(\ref{m135}).
There are $\begin{pmatrix}n-1-|A|\\k\\\end{pmatrix}$ Mobius operators ${\mathfrak D}(B\cup \{i\})$, with  $A\subseteq B \subseteq \Omega \setminus \{i\}$, and
$|B|=|A|+k$. Each of them contains $\Pi(A)$ with sign $(-1)^{k+1}$, and also $\Pi(A\cup \{i\})$ with sign $(-1)^{k}$. Therefore 
the number of projectors $\Pi(A)$  in the right hand side of Eq.(\ref{m135}), is 
\begin{eqnarray}\label{m136}
-\sum _{k=0}^{n-1-|A|}(-1)^{k}\begin{pmatrix}n-1-|A|\\k\\\end{pmatrix}\frac{1}{|A|+k+1}=-\frac{1}{n}\begin{pmatrix}n-1\\|A|\\\end{pmatrix}^{-1}
\end{eqnarray}
We used here the combinatorial relation
\begin{eqnarray}
\sum _{i=1}^{N}(-1)^{i-1}\begin{pmatrix}N-1\\i-1\\\end{pmatrix}\frac{1}{w+i}=\frac{w!(N-1)!}{(w+N)!}
\end{eqnarray}
The number of projectors $\Pi(A\cup \{i\})$ is also given by Eq.(\ref{m136}), but with a plus sign.
This proves Eq.(\ref{m130}).
\end{proof}
\begin{remark}
For a given $A$, the projectors $\varpi(i|A)$, $\Pi (\{i\}\cup A)$, $\Pi(A)$ commute with each other.
Measurement with $\varpi(i|A)$ will give the result `yes', if the measurement $\Pi (\{i\}\cup A)$ gives `yes', and the 
measurement $\Pi(A)$ gives `no'.
Measurement with $\varpi(i|A)=\Pi (\{i\}\cup A)-\Pi(A)$ gives the probability that the state of the system belongs to the space $H(\{i\}\cup A)$ but does not belong to the space $H(A)$.
\end{remark}
\begin{proposition}
\begin{itemize}
\item[(1)]
The $\tau (i)$ are  positive-semidefinite Hermitian matrices.
\item[(2)]
The $\sigma (i)$ given by
\begin{eqnarray}\label{5gx}
\sigma (i)=\frac{n}{d}\tau (i);\;\;\; \frac{d}{n}\sum _{i\in \Omega} \sigma(i)={\bf 1},
\end{eqnarray}
are density matrices which resolve the identity.
\item[(3)]
If the total set $\Sigma$ consists of an orthonormal set of $d$ states, then $\sigma (i)=\Pi(i)$.
\end{itemize}
\end{proposition}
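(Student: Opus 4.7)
My plan is to reduce (1) and the positivity/trace part of (2) to the preceding lemma, carry out one combinatorial trace computation, and handle (3) through the already noted vanishing of higher M\"obius operators for orthonormal sets.

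For (1), I would first observe that each $\varpi(i|A)=\Pi(\{i\}\cup A)-\Pi(A)$ is itself an orthogonal projector. Since $H(A)\subseteq H(\{i\}\cup A)$, we have $\Pi(A)\Pi(\{i\}\cup A)=\Pi(\{i\}\cup A)\Pi(A)=\Pi(A)$, from which $\varpi(i|A)^{2}=\varpi(i|A)$ follows by direct expansion; manifestly $\varpi(i|A)$ is Hermitian. Hence $\varpi(i|A)\ge 0$. The coefficients $\frac{1}{n}\binom{n-1}{|A|}^{-1}$ appearing in Eq.(\ref{m130}) are strictly positive, so $\tau(i)$ is a nonnegative linear combination of projectors, therefore Hermitian and positive semidefinite. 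This proves (1), and the positivity and Hermiticity of $\sigma(i)$ in (2) follow by scaling.

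For the remaining content of (2), the resolution of the identity $\frac{d}{n}\sum_{i}\sigma(i)={\bf 1}$ is a rescaling of Eq.(\ref{m135}). It thus suffices to show $\mathrm{Tr}[\tau(i)]=d/n$. The pre-basis hypothesis (any $d$ vectors of $\Sigma$ are linearly independent, and any $r\ge d$ vectors are total in $H_{d}$) yields $\mathrm{Tr}[\Pi(A)]=\dim H(A)=\min(|A|,d)$, so
\[
\mathrm{Tr}[\varpi(i|A)]=\begin{cases}1 & |A|\le d-1,\\ 0 & |A|\ge d.\end{cases}
\]
Substituting into Eq.(\ref{m130}) and using the fact that there are $\binom{n-1}{k}$ subsets $A\subseteq\Omega\setminus\{i\}$ with $|A|=k$,
\[
\mathrm{Tr}[\tau(i)]=\frac{1}{n}\sum_{k=0}^{d-1}\binom{n-1}{k}^{-1}\binom{n-1}{k}=\frac{d}{n},
\]
which gives $\mathrm{Tr}[\sigma(i)]=1$, completing (2).

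For (3), I would invoke the remark already made after Eq.(\ref{m12}): when $\Sigma$ is an orthonormal set, ${\mathfrak D}(B)=0$ for all $|B|\ge 2$. In the expansion $\tau(i)=\sum_{B\subseteq\Omega\setminus\{i\}}{\mathfrak D}(B\cup\{i\})/|B\cup\{i\}|$ only the term $B=\emptyset$ survives, giving $\tau(i)={\mathfrak D}(\{i\})=\Pi(i)$, whence $\sigma(i)=(n/d)\tau(i)=\Pi(i)$ as $n=d$ in this case. No step looks genuinely hard once the lemma is in hand; the one place requiring care is the trace computation, where the full pre-basis hypothesis must be used to secure $\mathrm{Tr}[\Pi(A)]=\min(|A|,d)$, after which the binomial sum collapses cleanly to $d$.
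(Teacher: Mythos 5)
Your proposal is correct and follows essentially the same route as the paper: it uses the lemma's expression of $\tau(i)$ as a positive combination of the projectors $\varpi(i|A)$ for positivity, collapses the same binomial sum to get ${\rm Tr}[\tau(i)]=d/n$, and invokes the vanishing of ${\mathfrak D}(B)$ for $|B|\ge 2$ in the orthonormal case. You merely make explicit two points the paper leaves implicit (that $\varpi(i|A)$ is indeed a projector since $H(A)\subseteq H(\{i\}\cup A)$, and that ${\rm Tr}[\varpi(i|A)]$ equals $1$ for $|A|\le d-1$ and $0$ otherwise), which is fine.
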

\begin{proof}
\begin{itemize}
\item[(1)]
$\tau (i)$ is given in Eq.(\ref{m130}), as a sum of projectors with positive coefficients, and this proves that they are positive semidefinite 
Hermitian matrices.
\item[(2)]
There are $\begin{pmatrix}n-1\\|A|\\\end{pmatrix}$ projectors  $\varpi(i|A)$ with the same cardinality $|A|$ of $A$.
Therefore
\begin{eqnarray}
{\rm Tr}[\tau (i)]=\frac{1}{n}\sum _{A\subseteq \Omega \setminus \{i\}}\begin{pmatrix}n-1\\|A|\\\end{pmatrix}^{-1}\begin{pmatrix}n-1\\|A|\\\end{pmatrix}=
\frac{1}{n}\sum _{|A|=0}^{d-1}1=\frac{d}{n}.
\end{eqnarray}
An alternative proof will be to use Eq.(\ref{m135}) and proposition \ref{b135}.
It is seen that the trace of $\tau (i)$ does not depend on $i$.
It follows that the $\sigma (i)=\frac{n}{d}\tau (i)$, are density matrices.
\item[(3)]
We have explained earlier, that if  the total set $\Sigma$ consists of an orthonormal set of $d$ states, then
the Mobius operators ${\mathfrak D}(B)=0$ for $|B|\ge 2$. Therefore in this case $\sigma (i)=\Pi(i)$. 
\end{itemize}
\end{proof}
\begin{proposition}
Let $\{\Pi(i)\}$ be  a pre-basis , $\{\sigma (i)\}$ the corresponding generalized basis, and $U$ a unitary transformation. 
Then, the generalized basis corresponding to the pre-basis 
$\{\Pi _U(i)=U\Pi(i)U^{\dagger}\}$ is $\{\sigma _U(i)=U\sigma (i)U^{\dagger}\}$, and obeys the resolution of the identity
\begin{eqnarray}\label{5vw}
\frac{d}{n}\sum _{i\in \Omega} \sigma_U(i)={\bf 1}.
\end{eqnarray}
In particular, if $F$ is the Fourier transform, the generalized basis corresponding to the pre-basis 
$\{{\widetilde \Pi} (i)=F\Pi(i)F^{\dagger}\}$ is $\{{\widetilde \sigma} (i)=F\sigma (i)F^{\dagger}\}$, and obeys the resolution of the identity
\begin{eqnarray}
\frac{d}{n}\sum _{i\in \Omega} {\widetilde \sigma}(i)={\bf 1}.
\end{eqnarray}
\end{proposition}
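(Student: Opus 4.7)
The plan is to exploit the fact that the whole construction $\Pi(i) \mapsto \sigma(i)$ is built from projectors onto spans of subsets of the pre-basis, and that unitary conjugation commutes with all of the operations used (forming projectors onto spans, taking linear combinations via the M\"obius transform, and normalizing). The strategy is therefore to track covariance at each level of the construction and conclude at the end.

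First, I would observe that the subspace spanned by $\{U\ket{i_1},\dots,U\ket{i_r}\}$ is exactly $U\,H(A)$, so the projector onto it is $U\Pi(A)U^\dagger$. Hence, for every $A \subseteq \Omega$ we have $\Pi_U(A) = U\Pi(A)U^\dagger$. Second, since $U(\cdot)U^\dagger$ is a linear map on operators, the defining formula (\ref{m11}) gives
\begin{eqnarray}
\mathfrak{D}_U(B) = \sum_{A\subseteq B}(-1)^{|A|-|B|}\Pi_U(A) = U\Big[\sum_{A\subseteq B}(-1)^{|A|-|B|}\Pi(A)\Big]U^\dagger = U\,\mathfrak{D}(B)\,U^\dagger.
\end{eqnarray}
Third, applying the same linearity to the definition (\ref{m135}) of $\tau(i)$ yields $\tau_U(i) = U\tau(i)U^\dagger$, and then by (\ref{5gx}) we obtain $\sigma_U(i) = (n/d)\tau_U(i) = U\sigma(i)U^\dagger$, which is the first claim.

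For the resolution of the identity, I would simply write
\begin{eqnarray}
\frac{d}{n}\sum_{i\in\Omega}\sigma_U(i) = \frac{d}{n}\,U\Big[\sum_{i\in\Omega}\sigma(i)\Big]U^\dagger = U\,\mathbf{1}\,U^\dagger = \mathbf{1},
\end{eqnarray}
using (\ref{5gx}) for the original pre-basis in the middle step. The Fourier case is then immediate by specializing $U = F$. I do not anticipate any real obstacle: the entire content of the proposition is the covariance of the construction under unitaries, and the only thing to check carefully is that projectors onto spans transform by conjugation, which is standard.
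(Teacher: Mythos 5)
Your proof is correct and follows essentially the same route as the paper: both establish that unitary conjugation commutes with the M\"obius construction (projectors onto spans, the transform defining ${\mathfrak D}(B)$, and the sums defining $\tau(i)$ and $\sigma(i)$), and then conjugate the resolution of the identity, specializing to $U=F$ for the Fourier case. Your version merely makes explicit the preliminary step that projectors onto spans of the transformed vectors are the conjugated projectors, which the paper leaves implicit.
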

\begin{proof}
Eq.(\ref{m11}) shows that if ${\mathfrak D}(B)$ are the M\"obius transforms of the projectors $\Pi(A)$,  
then the $U{\mathfrak D}(B)U^{\dagger}$ are the M\"obius transforms of the projectors $U\Pi(A)U^{\dagger}$.
Then from Eq.(\ref{m135}), follows the statement in the proposition.
Acting with $U$ and $U^\dagger$ on both sides of Eq.(\ref{5gx}), we prove the resolution of the identity in Eq.(\ref{5vw}).

The Fourier transform is a special case of a unitary transformation.
\end{proof}

\subsection{Example I}\label{sec1}
In $H_2$ we consider the total set of states:
\begin{eqnarray}\label{set1}
\Sigma=\left \{\ket{X;0}, \frac{1}{\sqrt 5}(\ket{X;0}+2i\ket{X;1}),  \frac{1}{\sqrt 2}(\ket{X;0}+\ket{X;1})\right\}.
\end{eqnarray}
In this case $n=3$, and
\begin{eqnarray}\label{27}
&&{\mathfrak D}(1,2)=\frac{1}{5}
\begin{pmatrix}
-1&2i\\
-2i&1\\
\end{pmatrix};\;\;\;
{\mathfrak D}(1,3)=\frac{1}{2}
\begin{pmatrix}
-1&-1\\
-1&1\\
\end{pmatrix};\;\;\;
{\mathfrak D}(2,3)=\frac{1}{10}
\begin{pmatrix}
3&-5+4i\\
-5-4i&-3\\
\end{pmatrix}\nonumber\\&&
{\mathfrak D}(1,2,3)=\frac{1}{10}
\begin{pmatrix}
-3&5-4i\\
5+4i&-7\\
\end{pmatrix}
\end{eqnarray}
Then
\begin{eqnarray}\label{eq30}
\frac{2}{3}\sigma  (1)&=&\Pi (1)+\frac{1}{2}[{\mathfrak D} (1,2)+{\mathfrak D} (1,3)]+\frac{1}{3}{\mathfrak D} (1,2,3)
\end{eqnarray}
and similarly for $\sigma (2), \sigma (3)$.
Therefore
\begin{eqnarray}\label{sigma}
&&\Pi(1)=\begin{pmatrix}
1&0\\
0&0\\
\end{pmatrix}\rightarrow \sigma  (1)=
\begin{pmatrix}
0.825&-0.125+0.100i\\
-0.125-0.100i&0.175\\
\end{pmatrix};\nonumber\\
&&\Pi(2)=\frac{1}{5}\begin{pmatrix}
1&-2i\\
2i&4\\
\end{pmatrix}\rightarrow \sigma  (2)=
\begin{pmatrix}
0.225&-0.125-0.200i\\
-0.125+0.200i&0.775\\
\end{pmatrix}\nonumber\\
&&\Pi(3)=\frac{1}{2}\begin{pmatrix}
1&1\\
1&1\\
\end{pmatrix}\rightarrow \sigma  (3)=
\begin{pmatrix}
0.450&0.250+0.100i\\
0.250-0.100i&0.550\\
\end{pmatrix}
\end{eqnarray}
The resolution of the identity is
\begin{eqnarray}\label{aw4}
\frac{2}{3}[\sigma  (1)+\sigma  (2)+\sigma  (3)]={\bf 1}.
\end{eqnarray}
We also give the Fourier transform of this generalized basis:
\begin{eqnarray}
&&{\widetilde \Pi}(1)=\frac{1}{2}\begin{pmatrix}
1&1\\
1&1\\
\end{pmatrix}\rightarrow {\widetilde \sigma}  (1)=
\begin{pmatrix}
0.375&0.325-0.100i\\
0.325+0.100i&0.625\\
\end{pmatrix};\nonumber\\
&&{\widetilde \Pi}(2)=\begin{pmatrix}
0.5&-0.3+0.4i\\
-0.3-0.4i&0.5\\
\end{pmatrix}\rightarrow {\widetilde \sigma}  (2)=
\begin{pmatrix}
0.375&-0.275+0.200i\\
-0.275-0.200i&0.625\\
\end{pmatrix}\nonumber\\
&&{\widetilde \Pi}(3)=\frac{1}{2}\begin{pmatrix}
1&0\\
0&0\\
\end{pmatrix}\rightarrow {\widetilde \sigma}  (3)=
\begin{pmatrix}
0.750&-0.050-0.100i\\
-0.050+0.100i&0.250\\
\end{pmatrix}.
\end{eqnarray}
The resolution of the identity in this case is
\begin{eqnarray}\label{aw4}
\frac{2}{3}[{\widetilde \sigma}  (1)+{\widetilde \sigma}  (2)+{\widetilde \sigma}  (3)]={\bf 1}.
\end{eqnarray}
\subsection{Example II}\label{sec2}
In $H_2$ we consider the total set of states:
\begin{eqnarray}\label{set2}
\Sigma=\left \{\ket{X;0}, \frac{1}{\sqrt 5}(\ket{X;0}+2i\ket{X;1}),  \frac{1}{\sqrt 2}(\ket{X;0}+\ket{X;1}), \frac{1}{\sqrt 5}(\ket{X;0}+2\ket{X;1})\right\}.
\end{eqnarray}
In comparison to the previous example, we added here the fourth vector.
In this case $n=4$. The ${\mathfrak D}(1,2)$, ${\mathfrak D}(1,3)$, ${\mathfrak D}(2,3)$, and ${\mathfrak D}(1,2,3)$, are the same as in Eq.(\ref{27}).
In addition to them, we have here the
\begin{eqnarray}
&&
{\mathfrak D}(1,4)=\frac{1}{5}
\begin{pmatrix}
-1&-2\\
-2&1\\
\end{pmatrix}
;\;\;\;
{\mathfrak D}(2,4)=\frac{1}{5}
\begin{pmatrix}
3&-2+2i\\
-2-2i&-3\\
\end{pmatrix};\;\;\;
{\mathfrak D}(3,4)=\frac{1}{10}
\begin{pmatrix}
3&-9\\
-9&-3\\
\end{pmatrix}\nonumber\\&&
{\mathfrak D}(1,2,4)=\frac{1}{5}
\begin{pmatrix}
-3&2-2i\\
2+2i&-2\\
\end{pmatrix};\;\;\;
{\mathfrak D}(1,3,4)=\frac{1}{10}
\begin{pmatrix}
-3&9\\
9&-7\\
\end{pmatrix}
\nonumber\\&&
{\mathfrak D}(2,3,4)=\frac{1}{10}
\begin{pmatrix}
-11&9-4i\\
9+4i&1\\
\end{pmatrix};\;\;\;
{\mathfrak D}(1,2,3,4)=\frac{1}{10}
\begin{pmatrix}
11&-9+4i\\
-9-4i&9\\
\end{pmatrix}
\end{eqnarray}
Then
\begin{eqnarray}\label{eq30}
\frac{1}{2}\sigma  (1)&=&\Pi (1)+\frac{1}{2}[{\mathfrak D} (1,2)+{\mathfrak D} (1,3)+{\mathfrak D} (1,4)]+\frac{1}{3}
[{\mathfrak D} (1,2,3)+{\mathfrak D} (1,2,4)+{\mathfrak D} (1,3,4)]\nonumber\\&+&\frac{1}{4}{\mathfrak D} (1,2,3,4)
\end{eqnarray}
and similarly for $\sigma (2), \sigma (3), \sigma (4)$.
Therefore
\begin{eqnarray}\label{sigma2}
&&\Pi(1)=\begin{pmatrix}
1&0\\
0&0\\
\end{pmatrix}\rightarrow \sigma  (1)=
\begin{pmatrix}
0.850&-0.150+0.066i\\
-0.150-0.066i&0.150\\
\end{pmatrix};\nonumber\\
&&\Pi(2)=\frac{1}{5}\begin{pmatrix}
1&-2i\\
2i&4\\
\end{pmatrix}\rightarrow \sigma  (2)=
\begin{pmatrix}
0.316&-0.150-0.200i\\
-0.150+0.200i&0.684\\
\end{pmatrix}\nonumber\\
&&\Pi(3)=\frac{1}{2}\begin{pmatrix}
1&1\\
1&1\\
\end{pmatrix}\rightarrow \sigma  (3)=
\begin{pmatrix}
0.516&0.183+0.067i\\
0.183-0.066i&0.484\\
\end{pmatrix}\nonumber;\\
&&\Pi(4)=\begin{pmatrix}
0&0\\
0&1\\
\end{pmatrix}\rightarrow \sigma  (4)=
\begin{pmatrix}
0.316&0.117+0.067i\\
0.117-0.066i&0.684\\
\end{pmatrix}
\end{eqnarray}
The resolution of the identity is
\begin{eqnarray}
\frac{1}{2}[\sigma  (1)+\sigma  (2)+\sigma  (3)+\sigma (4)]={\bf 1}.
\end{eqnarray}

\section{Non-independence and redundancy in the generalized bases }
\subsection{The coefficients ${s}_\theta (i)$ of Hermitian operators with respect to a generalized basis}
We consider a Hermitian operator $\theta $ and the $n$ real numbers
\begin{eqnarray}\label{47}
{s}_\theta (i)=\frac{d}{n}{\rm Tr}[\theta\sigma(i)];\;\;\;\sum _{i=1}^ns_\theta (i)={\rm Tr}(\theta).
\end{eqnarray}
Using the notation
\begin{eqnarray}
\theta _{\alpha \beta}=\bra{X;\alpha }\theta \ket{X;\beta};\;\;\;\sigma_{\alpha \beta }(i)=\bra {X;\alpha}\sigma(i)\ket {X;\beta}
;\;\;\;\alpha, \beta \in {\mathbb Z}(d),
\end{eqnarray}
we get
\begin{eqnarray}
s_\theta (i)=\frac{d}{n}\sum _{\alpha ,\beta}\theta _{\alpha \beta}\sigma_{\beta \alpha}(i).
\end{eqnarray}

We assume that the $n$ values of $s_\theta (i)$ are known, and the $d^2$ values of  $\theta _{\alpha \beta}$ are unknown.
Then this is a system of $n$ equations with $d^2$ unknowns. There are three cases:
\begin{itemize}
\item
If $n=d^2$, we can calculate $\theta _{\alpha \beta}$ (i.e., the operator $\theta$) from $s_\theta (i)$.
This is the case if we consider projectors $\Pi(i)$ associated to coherent states.
We have studied this case in \cite{V2017}.
\item
If $n>d^2$, and the values of $s_\theta (i)$ are accurate, the $n$ equations are compatible, and the system has an exact solution.
If the values of $s_\theta (i)$ are `noisy\rq{}, we can still find an `optimum solution\rq{}. All computer libraries can solve 
systems with more equations than unknowns, by minimizing the error, i.e., by minimizing the incompatibility between the equations.
\item
In the case $d< n<d^2$, we cannot calculate the $\theta _{\alpha \beta}$.
However, the information contained in $[s_\theta (1),...,s_\theta (n)]$ 
might be enough for certain physical conclusions.
In particular we show that change in the order of these numbers, might be linked to drastic physical changes in the system.
It is this case, with $d< n<d^2$, that we study in this paper.
\end{itemize}

\subsection{The ${s}_\rho (i)$ as pseudo-probabilities for density matrices }

If $\theta$ is a density matrix $\rho$, the $s_\rho (i)$ are results of measurements on $\rho$ with the
Hermitian operators $\sigma (i)$. A given $\sigma (i)$, is measurements with all its eigenprojectors
$\ket {{\mathfrak E}_\alpha (i)}\bra{{\mathfrak E}_\alpha (i)}$ (each of which gives a `yes-no\rq{} outcome), with weights its eigenvalues $e_\alpha (i)$:
\begin{eqnarray}\label{6n9}
s_\rho (i)=\frac{d}{n}{\rm Tr}[\rho\sigma(i)]=\frac{d}{n}\sum _{\alpha =1}^de_\alpha (i)\bra{{\mathfrak E}_\alpha (i)}\rho \ket {{\mathfrak E}_\alpha (i)}
\end{eqnarray}
Measurements with different $\sigma (i)$ are incompatible (they do not commute), 
and they need to be performed on different ensembles describing the same density matrix $\rho$.
The $n$ outcomes of such measurements are non-independent, but obey the relations
\begin{eqnarray}\label{478a}
0\le s_\rho (i)\le \frac{d}{n}{\mathfrak M}[\sigma _\rho (i)]< \frac{d}{n}<1;\;\;\;\sum _{i=1}^n s_\rho (i)=1.
\end{eqnarray}
Here  ${\mathfrak M}[\sigma _\rho (i)]$ is the maximum eigenvalue of the density matrix $\sigma (i)$.
The $s_\rho (i)\le \frac{d}{n}{\mathfrak M}[\sigma _\rho (i)]$ follows from Eq.(\ref{6n9}), if we replace all eigenvalues with the maximum eigenvalue.

We call the $s_\rho (i)$ pseudo-probabilities, where the `probabilities\rq{} indicates that they obey Eq.(\ref{478a}), and the `pseudo' indicates that they 
correspond to non-independent alternatives. Independent alternatives in the present context, correspond to orthonormal bases.
Since $s_\rho (i)< \frac{d}{n}$, the case
\begin{eqnarray}
&&s_\rho (i)=1\;\;{\rm if}\;\;i=i_0\nonumber\\
&&s_\rho (i)=0\;\;{\rm otherwise},
\end{eqnarray}
is not allowed for pseudo-probabilities. This  shows clearly the non-independence in the generalized bases.

\subsection{Use of Shannon entropy to quantify the non-independence and redundancy in generalized bases }

An entropic quantity \cite{E1,E2} that involves $n$ probabilities, takes values between $0$ and $\log n$.
We show that the entropy of our $n$ pseudo-probabilities, takes values between $(\log n -\log d)$ and $\log n$.
The lower bound is intimately related to the fact that $s_\rho (i)\le \frac{d}{n}$. 
\begin{definition}
The Shannon entropy
of a density matrix $\rho$ with respect to our generalizes bases, is given by:
\begin{eqnarray}
E_n (\rho)=-\sum _{i=1}^n s_\rho (i)\log [s_\rho (i)].
\end{eqnarray}
\end{definition}

\begin{proposition}
The Shannon entropy of a density matrix $\rho$ with respect to a generalized basis, is bounded as follows:
\begin{eqnarray}
\log n-\log d< E_n (\rho)\le \log n.
\end{eqnarray}
\end{proposition}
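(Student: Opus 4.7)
The plan is to handle the two inequalities separately; each reduces to a short calculation that exploits the pseudo-probability property $\sum_i s_\rho(i)=1$ together with the strict ceiling $s_\rho(i)<d/n$ coming from equation (\ref{478a}).

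For the upper bound $E_n(\rho)\le\log n$, I would simply invoke the classical Shannon maximum. Since $\{s_\rho(i)\}_{i=1}^n$ is a genuine probability distribution by (\ref{478a}), applying Jensen's inequality to the concave function $-x\log x$ (equivalently, Gibbs' inequality against the uniform distribution $q_i=1/n$) yields $-\sum_i s_\rho(i)\log s_\rho(i)\le\log n$, with equality iff $s_\rho(i)=1/n$ for every $i$. Nothing beyond the probability axioms is needed here, which is why the bound is stated as a non-strict $\le$.

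For the lower bound $E_n(\rho)>\log n-\log d$, the key algebraic move is to use $\sum_i s_\rho(i)=1$ to rewrite
\begin{align*}
E_n(\rho)
&=-\sum_{i=1}^n s_\rho(i)\log s_\rho(i)\\
&=(\log n-\log d)\sum_{i=1}^n s_\rho(i)-\sum_{i=1}^n s_\rho(i)\log\!\left[\tfrac{n}{d}\,s_\rho(i)\right]\\
&=(\log n-\log d)-\sum_{i=1}^n s_\rho(i)\log\!\left[\tfrac{n}{d}\,s_\rho(i)\right].
\end{align*}
Now I would invoke the strict bound $s_\rho(i)<d/n$ from (\ref{478a}), which forces $(n/d)\,s_\rho(i)<1$, and therefore $\log[(n/d)\,s_\rho(i)]<0$ whenever $s_\rho(i)>0$ (the convention $0\log 0=0$ takes care of the zero terms). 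Each summand in the final sum is $\le 0$, and since $\sum_i s_\rho(i)=1$ at least one index contributes a strictly negative term, giving $E_n(\rho)>\log n-\log d$.

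There is no genuine obstacle in either direction: both bounds are one-line consequences of the shape of $x\log x$ combined with the a priori constraints derived earlier. The only point requiring a little care is preserving the strict inequality in the lower bound; this rests entirely on the strict form $s_\rho(i)<d/n$ inherited from $\mathfrak{M}[\sigma(i)]<1$ in (\ref{478a}). Were equality $s_\rho(i)=d/n$ admissible, the minimum $\log n-\log d$ would be achieved (when $d$ of the pseudo-probabilities saturated the ceiling and the rest vanished), so the strictness of the pseudo-probability ceiling is exactly what forces the strictness of the entropic lower bound.
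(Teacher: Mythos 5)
Your proposal is correct and follows essentially the same route as the paper: the lower bound comes from factoring out $\log(n/d)$ using $\sum_i s_\rho(i)=1$ (your $\frac{n}{d}s_\rho(i)$ is exactly the paper's ${\rm Tr}[\rho\sigma(i)]$) with strictness supplied by ${\rm Tr}[\rho\sigma(i)]\le {\mathfrak M}[\sigma(i)]<1$, and the upper bound is the standard $\log n$ maximum for $n$ probabilities. Your write-up is merely a bit more explicit than the paper about why at least one summand is strictly negative.
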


\begin{proof}
\begin{eqnarray}
E_n (\rho)&=&-\sum _{i=1}^n \left \{\frac{d}{n}{\rm Tr}[\rho \sigma(i)]\right \}\log \left \{\frac{d}{n}{\rm Tr}[\rho \sigma(i)]\right \}\nonumber\\&=&
-\log \left (\frac{d}{n}\right )\sum _{i=1}^n \left \{\frac{d}{n}{\rm Tr}[\rho \sigma(i)]\right \}-\frac{d}{n}
\sum _{i=1}^n{\rm Tr}[\rho \sigma(i)]\log {\rm Tr}[\rho \sigma(i)]
\end{eqnarray}
Taking into account Eq.(\ref{47}), and the fact that $0\le {\rm Tr}[\rho \sigma(i)]\le 1$, we get
\begin{eqnarray}
E_n (\rho)&=&
-\log \left (\frac{d}{n}\right )-\frac{d}{n}\sum _{i=1}^n{\rm Tr}[\rho \sigma(i)]\log {\rm Tr}[\rho \sigma(i)]> \log n-\log d.
\end{eqnarray}
We note here that $0\le {\rm Tr}[\rho \sigma(i)]\le {\mathfrak M}[\sigma (i)]<1$ (Eq.(\ref{478a})), 
and therefore the $\sum {\rm Tr}[\rho \sigma(i)]\log {\rm Tr}[\rho \sigma(i)]$ is non-zero.

For the upper bound, we point out that $E_n (\rho)$ involves $n$ probabilities, and therefore $\log n$ is an upper bound.
\end{proof}

\begin{example}
\mbox{}
\begin{itemize}
\item
If $\rho =\frac{1}{d}{\bf 1}$ then 
\begin{eqnarray}
s_\rho (i)=\frac{1}{n};\;\;\;E_n \left (\frac{1}{d}{\bf 1}\right )=\log n.
\end{eqnarray}
\item
If $\rho=\ket{X;\alpha}\bra{X;\alpha}$, then
\begin{eqnarray}
&&s_\rho (i)=\frac{d}{n}\sigma_{\alpha \alpha }(i)\nonumber\\
&&E_n (\ket{X;\alpha}\bra{X;\alpha})=(\log n-\log d)-\frac{d}{n}\sum _{i=1}^n \left [\sigma_{\alpha \alpha }(i)\right ]
\log \left [\sigma_{\alpha \alpha }(i)\right ]
\end{eqnarray}
For $\rho=\ket{X;0}\bra{X;0}$ and with the generalized basis in Eq.(\ref{sigma}), we get
\begin{eqnarray}
&&\sigma _{00}(1)=0.825;\;\;\;\sigma _{00}(2)=0.225;\;\;\;\sigma _{00}(3)=0.450\nonumber\\
&&E_3(\ket{X;0}\bra{X;0})=\log3-\log2+0.569=0.974
\end{eqnarray}
For $\rho=\ket{X;0}\bra{X;0}$ and with the generalized basis in Eq.(\ref{sigma2}), we get
\begin{eqnarray}
&&\sigma _{00}(1)=0.850;\;\;\;\sigma _{00}(2)=0.316;\;\;\;\sigma _{00}(3)=0.516;\;\;\;\sigma _{00}(4)=0.316\nonumber\\
&&E_4(\ket{X;0}\bra{X;0})=\log4-\log2+0.603=1.296
\end{eqnarray}
We use the base $e$ for logarithms, and the results are in nats. 
\end{itemize}
\end{example}
We have seen above, that the upper bound $\log n$ in the set of $\{E_n(\rho)\}$, 
is reached with the density matrix $\rho =\frac{1}{d}{\bf 1}$.
We have also seen that $\log n-\log d$ is a lower bound,  
but it is an open question what is the infimum.
We call the 
\begin{eqnarray}
{\mathfrak R}=\log n-\log d=\log ({\cal R}+1)
\end{eqnarray}
entropic redundancy index. It plays a complementary role to the redundancy index ${\cal R}$ in Eq.(\ref{red}).

We have shown that the Shannon entropies $E_n (\rho)$ take values in the interval between ${\mathfrak R}$ and ${\mathfrak R}+\log d$, which has length $\log d$, for any $n$. In the standard Shannon entropy with respect to an orthonormal basis, ${\mathfrak R}=0$.

\section{Representation of vectors in the generalized basis}

An arbitrary normalized vector in $H_d$ can now be expanded in terms of $n> d$ component vectors, as
\begin{eqnarray}\label{6bm}
\ket {V}=\sum _{i=1}^n \ket{V(i)};\;\;\;\ket{V(i)}=\frac{d}{n}\sigma (i) \ket{V}.
\end{eqnarray}
The scalar product is given by
\begin{eqnarray}
&&\langle V\ket {U}=\sum _{i,j}\bra{V}g(i,j) \ket{U};\;\;\;g(i,j)=\frac{d^2}{n^2}\sigma (i)\sigma(j)\nonumber\\
&&\sum_{i,j}g(i,j)={\bf 1};\;\;\;[g(i,j)]^{\dagger}=g(j,i).
\end{eqnarray}
The `metric\rq{} $g(i,j)$ consists of $n^2$ matrices, each of which is a $d\times d$ matrix.

We express the density matrices $\sigma(i)$ in terms of their eigenvalues (probabilities) $p_\alpha (i)$ and their eigenvectors 
$\ket{{\mathfrak E}_\alpha (i)}$, as:
\begin{eqnarray}
&&\sigma (i)=\sum _{\alpha =1}^d p_\alpha (i)\ket {{\mathfrak E}_\alpha (i)}\bra{{\mathfrak E}_\alpha (i)};\;\;\;\sum _{\alpha =1}^dp_\alpha (i)=1;\;\;\;\sum _{\alpha =1}^d\ket {{\mathfrak E}_\alpha (i)}\bra{{\mathfrak E}_\alpha (i)}={\bf 1}\nonumber\\
&&\frac{d}{n}\sum _{i=1}^n\sum _{\alpha =1}^d p_\alpha (i)\ket {{\mathfrak E}_\alpha (i)}\bra{{\mathfrak E}_\alpha (i)}={\bf 1}.
\end{eqnarray}
Our formalism renormalizes each projector $\ket{i}\bra{i}$ into the density matrix $\sigma (i)$ 
which can be viewed as a set
of orthonormal bases $\ket{{\mathfrak E}_\alpha (i)}$ with probabilities $p_\alpha (i)$ attached to them.

\begin{example}

In $H_2$ we consider the vector
\begin{eqnarray}\label{44}
\ket{V}=\frac{1}{\sqrt{15}}
\begin{pmatrix}
1+2i\\
3-i\\
\end{pmatrix}.
\end{eqnarray}
We also consider the matrices $\sigma (1)$, $\sigma (2)$, $\sigma (3)$, in Eq.(\ref{sigma}), and
using the resolution of the identity in Eq.(\ref{aw4}) we expand this vector as
\begin{eqnarray}
\ket{V}=\frac{2}{3}[\sigma  (1)\ket{V}+\sigma  (2)\ket{V}+\sigma  (3)\ket{V}]=
\begin{pmatrix}
0.094+0.357i\\
0.103-0.090i\\
\end{pmatrix}+
\begin{pmatrix}
-0.060-0.004i\\
0.309-0.142i\\
\end{pmatrix}+
\begin{pmatrix}
0.223+0.163i\\
0.361-0.025i\\
\end{pmatrix}.
\end{eqnarray}
There is redundancy and `duplication\rq{} in this approach, which is precisely the merit for using it.
Errors due to noise in some of these components are compensated by the other components, and the overall error is small, as discussed below.
\end{example}

\subsection{Robustness of the representation in the presence of noise}

We add noise to the $n$ components of the vector $\ket {V}$ in Eq. (\ref{6bm}), and we get the vector:
\begin{eqnarray}\label{478}
\ket {W}= \frac{d}{n}\sum _{i=1}^n(1+{\mathfrak N}_i)\sigma (i) \ket{V}.
\end{eqnarray}
Here ${\mathfrak N}_i$ are $n$ independent real random numbers, uniformly distributed in the interval $[-\mu, \mu]$ (in all numerical calculations in this paper $\mu=0.5$).

As a measure of the error we calculate the number
\begin{eqnarray}\label{error}
&&\epsilon =||\ket {W}-\ket{V}||=\sqrt {\epsilon _D+\epsilon _{ND}}\nonumber\\
&&\epsilon_D=\sum _{i}{\mathfrak N}_i^2{\bra{V}g(i,i)\ket{V}}\nonumber\\
&&\epsilon _{ND}=\sum _{i\ne j}{\mathfrak N}_i{\mathfrak N}_j{\bra{V}g(i,j)\ket{V}}.
\end{eqnarray}
$\epsilon_D$ contains the diagonal terms which are positive numbers, and $\epsilon_{ND}$ contains the non-diagonal terms which might be negative.

For comparison, we also expand the same vector in the orthonormal basis of position states, as
\begin{eqnarray}\label{E1}
\ket{V}=\sum _{\alpha=1}^dV(\alpha)\ket{X;\alpha};\;\;\;V(\alpha)=\bra{X;\alpha}V\rangle,
\end{eqnarray}
We then add noise in these $d$ components as follows:
\begin{eqnarray}\label{E2}
\ket {W _{\rm orth}} =\sum _{\alpha=1}^d[1+{\mathfrak N}_\alpha]V(\alpha)\ket{X;\alpha}
=\ket{V}+\sum _{\alpha=1}^d{\mathfrak N}_\alpha V(\alpha)\ket{X;\alpha}.
\end{eqnarray}
Here ${\mathfrak N}_\alpha$ are $d$ independent real random numbers, uniformly distributed in the interval $[-\mu, \mu]$.

As a measure of the error in this case, we calculate the number
\begin{eqnarray}\label{error2}
\epsilon _{\rm orth} ={||\ket{W_{\rm orth}} -\ket {V}||}=
\left [\sum _{\alpha}{\mathfrak N}_\alpha^2|V(\alpha)|^2\right ]^{1/2}.
\end{eqnarray}
Here we only have diagonal terms which are positive numbers.
Therefore we expect that in general the error $\epsilon$ will be smaller than the error $\epsilon _{\rm orth}$.
Numerical results below confirm that this is the case.

\begin{example}
In $H_2$ we consider the vector of Eq.(\ref{44}).
We used the three density matrices $\sigma  (1)$, $\sigma  (2)$, $\sigma  (3)$, in Eq.(\ref{sigma}) 
(which are renormalizations of the three vectors in Eq.(\ref{set1})) as a generalized basis.
Using three independent random numbers,
we calculated the errors in Eq.(\ref{error}), and we called them $\epsilon_{3D}$, $\epsilon_{3ND}$ and $\epsilon _3$.
We repeated the calculation five times (with different sets of random numbers) and found the errors given in table \ref{t1}.

We also used the four density matrices $\sigma  (1)$, $\sigma  (2)$, $\sigma  (3)$, $\sigma (4)$,  in Eq.(\ref{sigma2}),
(which are renormalizations of the four vectors in Eq.(\ref{set2})) as a generalized basis.
Using Eq.(\ref{478}) with four independent random numbers,
we calculated the errors of Eq.(\ref{error}), and we called them  $\epsilon_{4D}$, $\epsilon_{4ND}$ and $\epsilon _4$.
Results in this case are also given in table \ref{t1}.

Furthermore, we used the orthonormal basis in Eq.(\ref{E1}), and added noise in the two components as in Eq.(\ref{E2}),
using two independent random numbers.
We then calculated the error $\epsilon _{\rm orth}$ of Eq.(\ref{error2}), and give the results in table \ref{t1}.

The results show that the generalized bases of the density matrices $\sigma  (i)$, 
lead to smaller error than the orthonormal bases.
In some cases, the non-diagonal parts of the error $\epsilon_{3ND}$, $\epsilon_{4ND}$, are negative, and this contributes to the reduction of the error.
\end{example}

\section{Use of generalized bases to detect physical changes in the presence of noise}
\subsection{Location indices of a Hermitian operator}
\begin{definition}
Let $\theta (\lambda)$ be a Hermitian operator, e.g. a Hamiltonian that depends on a coupling parameter $\lambda$.
Also let  $s_\theta (i |\lambda)$ be the $n$ coefficients defined in Eq(\ref{47}) (which are here functions of $\lambda$). 
We order the $s_\theta (1 |\lambda),...,s_\theta (n|\lambda)$ as
\begin{eqnarray}\label{35}
s_\theta (i_1 |\lambda)\ge s_\theta (i_2|\lambda)\ge ...\ge s_\theta (i _n|\lambda). 
\end{eqnarray}
The location index of $\theta (\lambda)$, 
with respect to $\{\sigma (i)\}$, is the $n$-tuple
\begin{eqnarray}
{\cal L}[\theta(\lambda)]=(i_1,...,i_n)\in {\mathfrak T}.
\end{eqnarray}
Here ${\mathfrak T}$ is the set of the $n!$ permutations of the $n$ labels $i$, of $s_\theta (i |\lambda)$. 
\end{definition}

The ${\cal L}[\theta(\lambda)]$ indicates the position of $\theta (\lambda)$ with respect to the generalized basis of
$\{\sigma (i)\}$.
$\theta (\lambda)$ is more close to $\sigma (i_1)$ (because $s_\theta (i_1 |\lambda)$ is the largest),
less close to $\sigma (i_2)$, even less close to $\sigma (i_3)$, etc.

In ref\cite{V2017}, we used this concept with projectors $\Pi (i)$ related to coherent states
which are linked to the familiar concept of phase space, and then the ${\cal L}[\theta(\lambda)]$ (with $n=d^2$)
locates the operator $\theta (\lambda)$ in phase space.
Here the physical interpretation of ${\cal L}[\theta(\lambda)]$ is more abstract, 
because the $\Pi (i)$ are arbitrarily chosen.
Nevertheless, the ${\cal L}[\theta(\lambda)]$ describes the position of $\theta (\lambda)$ with respect to $\{\sigma (i)\}$,
which  resolve the identity.

Operators $\theta (\lambda)$ for which the $n$ values $s_\theta (i |\lambda)$ 
(with $i=1,...,n$) are different from each other (i.e., there is no equality in Eq.(\ref{35}))
are described by only one permutation
$(i_1,...,i_n)$. This motivates the following definition.
\begin{definition}
For a given set $\Theta=\{\theta (\lambda)\;|\;\lambda \in [a,b]\}$, its subset ${\widetilde \Theta}=\{\theta (\lambda)\;|\;\lambda \in I\subseteq [a,b]\}$ contains 
all $\theta (\lambda)$ for which the $n$ values $s_\theta (i |\lambda)$ 
(with fixed $\lambda$ and $i=1,...,n$) are different from each other.
The interval $I$ excludes all values of $\lambda$ for which there are some equalities in Eq.(\ref{35}).
\end{definition}

\begin{proposition}
Within the set ${\widetilde \Theta}$, we say that $\theta (\lambda _1)$ and $\theta (\lambda _2)$
are comonotonic, and denote it as $\theta (\lambda _1)\sim \theta (\lambda _2)$, if ${\cal L}[\theta(\lambda _1)]={\cal L}[\theta(\lambda _2)]$.
Then $\sim$ is an equivalence relation and ${\widetilde \Theta}$ is partitioned into equivalence classes,
each of which contains operators which are comonotonic to each other.
\end{proposition}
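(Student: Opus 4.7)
The plan is to recognize that the relation $\sim$ is a pullback of equality under the map $\theta(\lambda)\mapsto \mathcal{L}[\theta(\lambda)]$, and that pullbacks of equivalence relations are equivalence relations. The only subtle point that needs to be checked is that the map $\mathcal{L}$ is well-defined as a single element of $\mathfrak{T}$ on all of $\widetilde{\Theta}$; once that is in place, the three axioms follow immediately, and the partition statement is a standard consequence.

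First I would verify well-definedness. By the definition of $\widetilde{\Theta}$, for every $\theta(\lambda)\in \widetilde{\Theta}$ the $n$ real numbers $s_\theta(1|\lambda),\dots,s_\theta(n|\lambda)$ are pairwise distinct. The ordering $s_\theta(i_1|\lambda)>s_\theta(i_2|\lambda)>\cdots>s_\theta(i_n|\lambda)$ is therefore strict, so the permutation $(i_1,\dots,i_n)$ extracted from it is unique. Hence $\mathcal{L}[\theta(\lambda)]$ is an unambiguously defined element of $\mathfrak{T}$ for every $\theta(\lambda)\in\widetilde{\Theta}$.

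Next I would verify the three axioms directly from the defining condition $\theta(\lambda_1)\sim\theta(\lambda_2)\Leftrightarrow \mathcal{L}[\theta(\lambda_1)]=\mathcal{L}[\theta(\lambda_2)]$. Reflexivity is immediate since $\mathcal{L}[\theta(\lambda)]=\mathcal{L}[\theta(\lambda)]$. Symmetry follows because equality of $n$-tuples in $\mathfrak{T}$ is symmetric. Transitivity follows because if $\mathcal{L}[\theta(\lambda_1)]=\mathcal{L}[\theta(\lambda_2)]$ and $\mathcal{L}[\theta(\lambda_2)]=\mathcal{L}[\theta(\lambda_3)]$, then $\mathcal{L}[\theta(\lambda_1)]=\mathcal{L}[\theta(\lambda_3)]$ by transitivity of equality.

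Finally, having established that $\sim$ is an equivalence relation on $\widetilde{\Theta}$, the partition into equivalence classes is the standard conclusion: for each $\pi\in\mathfrak{T}$ the preimage $\mathcal{L}^{-1}(\pi)\cap\widetilde{\Theta}$ is either empty or an equivalence class, these classes are pairwise disjoint (different $\pi$'s give different classes), and their union is all of $\widetilde{\Theta}$. There is essentially no obstacle here; the only place one could slip is in omitting the well-definedness check, which is precisely why the definition of $\widetilde{\Theta}$ excludes $\lambda$'s at which ties occur in \eqref{35}.
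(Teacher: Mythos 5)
Your proposal is correct and follows essentially the same route as the paper: reflexivity and symmetry are immediate, and transitivity reduces to transitivity of equality once one observes that on $\widetilde{\Theta}$ the strict ordering of the $s_\theta(i|\lambda)$ makes ${\cal L}[\theta(\lambda)]$ a single, well-defined permutation. Your explicit well-definedness check is exactly the point the paper flags when it notes that only one permutation corresponds to each $\theta(\lambda)$ in $\widetilde{\Theta}$ (and hence transitivity can fail in $\Theta$), so no further comment is needed.
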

\begin{proof}
The proofs of reflexibity ($\theta (\lambda _1)\sim \theta (\lambda _1)$), and symmetry
(if $\theta (\lambda _1)\sim \theta (\lambda _2)$ then $\theta (\lambda _2)\sim \theta (\lambda _1)$), are trivial.
Transitivity holds within ${\widetilde \Theta}$.
Indeed if ${\cal L}[\theta(\lambda _1)]={\cal L}[\theta(\lambda _2)]$ and ${\cal L}[\theta(\lambda _2)]={\cal L}[\theta(\lambda _3)]$
then ${\cal L}[\theta(\lambda _1)]={\cal L}[\theta(\lambda _3)]$.
It is important for the proof that only one permutation corresponds to a given $\theta (\lambda)$.
For this reason, transitivity does not hold within ${\Theta}$, in general.
\end{proof}

\begin{definition}
If all $\theta (\lambda)$ in the set  $\{\theta (\lambda)\;|\;\lambda \in (c_1,c_2)\}$ are comonotonic to each other, the $R=(c_1,c_2)$ is called comonotonicity interval (with respect to the operators $\theta (\lambda)$).
The points in the set $[a,b]\setminus I$ are crossing points from one comonotonicity region to another.
\end{definition}

In this paper we show with examples, that comonotonic operators are physically similar operators.
As $\lambda$ varies within a comonotonicity interval, we get mild physical changes in the system.  
The crossing points from one comonotonicity interval to another, 
might be related with drastic physical changes in the system.
In the example below, this involves abrupt change in the ground state of the system.

\subsection{Ground state of a physical system}
In the Hilbert space $H_2$ we consider a system with Hamiltonian which is described with the matrix
\begin{eqnarray}\label{hami}
\theta (\lambda)=\begin{pmatrix}
1 &0\\
0&1\\
\end{pmatrix}+\lambda
\begin{pmatrix}
0 &1+i\\
1-i&0\\
\end{pmatrix}.
\end{eqnarray}
This two-dimensional system is used many times as an approximation to an infinite-dimensional system, 
where due to low energy the system is practically in the subspace of the lowest two states.
Many of the experimentally available qubits are of this type (e.g., the superconducting qubits).

We will study changes to the ground state of the system as the coupling parameter $\lambda $ varies, from negative to positive values.
We will show that at $\lambda=0$ the ground state of the system changes abruptly from one vector, to another one which is orthogonal to it. 

A method is practically useful if it is robust in the presence of noise.
If we add to the `real\rq{} values of the parameters a small amount of noise (due to experimental and other errors), the results should not change much. 
In order to study this we consider the `noisy Hamiltonian\rq{}
\begin{eqnarray}\label{54}
\phi (\lambda)=\begin{pmatrix}
1+{\mathfrak N}_1 &0\\
0&1+{\mathfrak N}_2\\
\end{pmatrix}
+\lambda
\begin{pmatrix}
0 &1+i\\
1-i&0\\
\end{pmatrix}.
\end{eqnarray}
For simplicity, we add noise only to the `free part\rq{} of the Hamiltonian, with the independent random numbers 
${\mathfrak N}_1, {\mathfrak N}_2$, which are uniformly distributed in the interval $[-\mu, \mu]$.
$\phi (\lambda)$ is an approximation to the `real Hamiltonian\rq{} $\theta (\lambda)$.
We will show that the ground state of  $\phi (\lambda)$ changes
rapidly but smoothly, within a small region of $\lambda$, around $\lambda =0$ with width 
$|{\mathfrak N}_1-{\mathfrak N}_2|$ .
The abrupt change of the ground state of ${\theta} (\lambda)$ at $\lambda=0$,
becomes a rapid but smooth change of the ground state of $\phi (\lambda)$, within a small region around $\lambda =0$.

Our method based on generalized bases is complementary to the calculation of eigenvalues and eigenvectors, 
and is robust in the presence of noise, because of the redundancy which is inherent in it.
For the noiseless Hamiltonian $\theta (\lambda)$, there are two comonotonicity regions $(-\infty, 0)$ and $(0, \infty)$, and the point
$\lambda=0$ is a crossing point from the first comonotonicity region to the second one.
For the noisy Hamiltonian $\phi (\lambda)$, there are more crossing points near $\lambda =0$, which indicate 
that drastic physical changes occur in that region.
There are no crossing points far from $\lambda =0$, and this reflects the fact that only mild physical changes occur there.

\subsubsection{Noiseless Hamiltonians at zero temperature in a generalized basis}

The eigenvalues (energy levels) and eigenvectors of the `noiseless Hamiltonian\rq{} $\theta (\lambda)$, are
\begin{eqnarray}\label{50}
&&e_1(\lambda)=1+\lambda\sqrt{2};\;\;\;
\ket{{\mathfrak e}_1}=\frac{1}{2}\begin{pmatrix}
(1+i)\\
\sqrt {2}\\
\end{pmatrix}\nonumber\\
&&e_2(\lambda)=1-\lambda \sqrt{2};\;\;\;
\ket{{\mathfrak e}_2}=\frac{1}{2}\begin{pmatrix}
-(1+i)\\
\sqrt {2}\\
\end{pmatrix};\;\;\;\langle {\mathfrak e}_1\ket{{\mathfrak e}_2}=0.
\end{eqnarray}
For $\lambda <0$, the $\ket{{\mathfrak e}_1}$ is the 
ground state of the system,
while for $\lambda >0$, the $\ket{{\mathfrak e}_2}$ is the 
ground state of the system. At $\lambda =0$ the two eigenvalues become equal to each other, and the ground state
changes abruptly from $\ket{{\mathfrak e}_1}$ for $\lambda <0$, to 
$\ket{{\mathfrak e}_2}$ (which is orthogonal to $\ket{{\mathfrak e}_1}$) for
$\lambda >0$. 

We next use the generalized bases studied in this paper.
We first use the density matrices in Eq.(\ref{sigma}) we find that the $s_\theta(i|\lambda)$
are
\begin{eqnarray}
&&s_\theta(1|\lambda)=\frac{2}{3}[1-0.050 \lambda];\;\;\;s_\theta(2|\lambda)=\frac{2}{3}[1-0.650 \lambda];\;\;\;s_\theta(3|\lambda)=\frac{2}{3}[1
+0.700\lambda]\nonumber\\
&&s_\theta(1|\lambda)+s_\theta(2|\lambda)+s_\theta(3|\lambda)=2.
\end{eqnarray}
Therefore we have two comonotonicity regions (which we give together with the corresponding location indices for $\theta (\lambda)$):
\begin{eqnarray}
&&R_1=\left (-\infty,0\right);\;\;\;{\cal L}[\theta(\lambda)]=(2,1,3)\nonumber\\
&&R_2=\left (0,\infty\right);\;\;\;{\cal L}[\theta(\lambda)]=(3,1,2)
\end{eqnarray}
At $\lambda =0$ we pass from the first comonotonicity region to the second one, and this is associated with drastic physical changes in the ground state of the system.  

We also use the density matrices in Eq.(\ref{sigma2}) we find that the $s_\theta(i|\lambda)$ are
\begin{eqnarray}
&&s_\theta(1|\lambda)=\frac{1}{2}[1-0.168 \lambda];\;\;\;s_\theta(2|\lambda)=\frac{1}{2}[1-0.700 \lambda]\nonumber\\
&&s_\theta(3|\lambda)=\frac{1}{2}[1+0.500\lambda]
;\;\;\;s_\theta(4|\lambda)=\frac{1}{2}[1+0.368\lambda]\nonumber\\
&&s_\theta(1|\lambda)+s_\theta(2|\lambda)+s_\theta(3|\lambda)+s_\theta(3|\lambda)=2.
\end{eqnarray}
Therefore we have two comonotonicity regions:
\begin{eqnarray}
&&R_1=\left (-\infty,0\right);\;\;\;{\cal L}[\theta(\lambda)]=(3,4,1,2)\nonumber\\
&&R_2=\left (0,\infty\right);\;\;\;{\cal L}[\theta(\lambda)]=(2,1,4,3)
\end{eqnarray}
It is seen that with this generalized basis also, we arrive at the same conclusions.  
Two different generalized bases lead to the same conclusion as the method of eigenvectors and eigenvalues.

\subsubsection{Noiseless Hamiltonians at finite temperature in a generalized basis}

Let 
\begin{eqnarray}
{\cal E}=\exp [-\beta \theta (\lambda)];\;\;\;{s}_{\cal E} (i)=\frac{d}{n}{\rm Tr}[{\cal E}\sigma(i)],
\end{eqnarray}
where $\beta$ is the inverse temperature.
Then the partition function is
\begin{eqnarray}
Z={\rm Tr}{\cal E}=\sum _{i=1}^ns_{\cal E} (i).
\end{eqnarray}
For the Hamiltonian $\theta (\lambda)$, we  get
\begin{eqnarray}
{\cal E}=e^{-\beta}
\begin{pmatrix}
\cosh (\beta \lambda \sqrt{2})&-\frac{1+i}{\sqrt{2}}\sinh (\beta \lambda \sqrt{2})\\
-\frac{1-i}{\sqrt{2}}\sinh (\beta \lambda \sqrt{2})&\cosh (\beta \lambda \sqrt{2})\\
\end{pmatrix}
\end{eqnarray}
We use the density matrices in Eq.(\ref{sigma}), and we find that the $s_{\cal E}(i|\lambda)$
are
\begin{eqnarray}
&&s_{\cal E}(1|\lambda)=\frac{2}{3}e^{-\beta}[\cosh (\beta \lambda \sqrt{2})+0.035\sinh (\beta \lambda \sqrt{2})]\nonumber\\
&&s_{\cal E}(2|\lambda)=\frac{2}{3}e^{-\beta}[\cosh (\beta \lambda \sqrt{2})+0.459\sinh (\beta \lambda \sqrt{2})]\nonumber\\
&&s_{\cal E}(3|\lambda)=\frac{2}{3}e^{-\beta}[\cosh (\beta \lambda \sqrt{2})-0.494\sinh (\beta \lambda \sqrt{2})].
\end{eqnarray}
We also use the density matrices in Eq.(\ref{sigma2}), and we find that the $s_{\cal E}(i|\lambda)$
are
\begin{eqnarray}
&&s_{\cal E}(1|\lambda)=\frac{1}{2}e^{-\beta}[\cosh (\beta \lambda \sqrt{2})+0.118\sinh (\beta \lambda \sqrt{2})]\nonumber\\
&&s_{\cal E}(2|\lambda)=\frac{1}{2}e^{-\beta}[\cosh (\beta \lambda \sqrt{2})+0.494\sinh (\beta \lambda \sqrt{2})]\nonumber\\
&&s_{\cal E}(3|\lambda)=\frac{1}{2}e^{-\beta}[\cosh (\beta \lambda \sqrt{2})-0.352\sinh (\beta \lambda \sqrt{2})]\nonumber\\
&&s_{\cal E}(4|\lambda)=\frac{1}{2}e^{-\beta}[\cosh (\beta \lambda \sqrt{2})-0.260\sinh (\beta \lambda \sqrt{2})].
\end{eqnarray}
In calculations that involve the partition function, we can use a generalized basis and the $s_{\cal E} (i|\lambda)$, instead of an orthonormal basis. 
The merit of this, is robustness of the results in the presence of noise, as we show with examples below.

We note that the partition function is
\begin{eqnarray}
Z=2e^{-\beta}\cosh (\beta \lambda \sqrt{2}),
\end{eqnarray}
and from this we find the average energy
\begin{eqnarray}
<e(\lambda)>=-\frac{1}{Z}\frac{\partial Z}{\partial \beta}=1-\lambda \sqrt{2}\tanh (\beta \lambda \sqrt{2}).
\end{eqnarray}
It is seen that at low temperatures ($\beta \rightarrow \infty$), 
\begin{eqnarray}
&&\lambda >0\;\;\rightarrow\;\;<e(\lambda)>\approx 1-\lambda \sqrt{2};\nonumber\\
&&\lambda <0\;\;\rightarrow\;\;<e(\lambda)>\approx 1+\lambda \sqrt{2}.
\end{eqnarray}
This is consistent with the result in Eq.(\ref{50}), at zero temperatures.

\subsubsection{Hamiltonians with noise at zero temperature: eigenvalues approach}

The eigenvalues and eigenvectors of the `noisy Hamiltonian\rq{} $\phi (\lambda)$, are
\begin{eqnarray}\label{321}
&&e_A(\lambda)=1+S-\sqrt{D^2+2\lambda ^2}
;\;\;\;S=\frac{{\mathfrak N}_1+{\mathfrak N}_2}{2};\;\;\;D=\frac{{\mathfrak N}_1-{\mathfrak N}_2}{2}\nonumber\\
&&e_B(\lambda)=1+S+\sqrt{D^2+2\lambda ^2}
\end{eqnarray}
It is convenient  to replace the random numbers 
${\mathfrak N}_1, {\mathfrak N}_2$, with the $S,D$ which are also independent random numbers.
The corresponding eigenvectors (not normalized) are given by
\begin{eqnarray}
&&\ket{{\mathfrak e}_A(\lambda)}=\begin{pmatrix}
-\frac{\lambda}{|\lambda|}(1+i)\\
\frac{D}{|\lambda|}+\sqrt{2+\left (\frac {D}{|\lambda|}\right )^2}\\
\end{pmatrix}\nonumber\\
&&\ket{{\mathfrak e}_B(\lambda)}=\begin{pmatrix}
-\frac{\lambda}{|\lambda|}(1+i)\\
\frac{D}{|\lambda|}-\sqrt{2+\left (\frac {D}{|\lambda|}\right )^2}
\end{pmatrix};\;\;\;\langle {\mathfrak e}_A(\lambda)\ket{{\mathfrak e}_B(\lambda)}=0.
\end{eqnarray}
The eigenvectors depend on the sign of $\lambda$ and on the value of $\frac {D}{|\lambda|}$.
The lowest eigenvalue is $e_A(\lambda)$ and the corresponding eigenvector $\ket{{\mathfrak e}_A(\lambda)}$.

For small values of $\frac {D}{|\lambda|}$ the ground state $\ket{{\mathfrak e}_A(\lambda)}$ can be written as
\begin{eqnarray}
\ket{{\mathfrak e}_A(\lambda)}=\begin{pmatrix}
-\frac{\lambda}{|\lambda|}(1+i)\\
\sqrt {2}+\frac{D}{|\lambda|}+\frac{\sqrt{2}}{4}\left (\frac{D}{|\lambda|}\right ) ^2-...\\
\end{pmatrix}
\end{eqnarray}
In this case for $\lambda <0$, we get $\ket{{\mathfrak e}_A(\lambda)}\approx \ket{{\mathfrak e}_1}$, and for  $\lambda >0$, 
we get $\ket{{\mathfrak e}_A(\lambda)}\approx \ket{{\mathfrak e}_2}$.
It is seen that when the noise parameter $D$ is much smaller than the coupling parameter, we recover the results of the noiseless case, discussed earlier.

Without loss of generality, we assume that $D\ge 0$.
The physically interesting and practically useful case, is to assume a fixed noise parameter $D$, 
and study the ground state as $\lambda$ varies within the region $(-D, D)$, and in particular very close to $0$.
This is the limit of large values of  $\frac {D}{|\lambda|}$.
We compare $\ket{{\mathfrak e}_A(-|\lambda|)}$ with $\ket{{\mathfrak e}_A(|\lambda|)}$, and see to what extend they are orthogonal as in the noiseless case.
In particular we calculate the overlap 
\begin{eqnarray}\label{r}
r(|\lambda|)=\frac{\bra{{\mathfrak e}_A(-|\lambda|)}{\mathfrak e}_A(|\lambda|)\rangle}
{\sqrt {\bra{{\mathfrak e}_A(-|\lambda|)}{\mathfrak e}_A(-|\lambda|)\rangle\bra{{\mathfrak e}_A(|\lambda|)}{\mathfrak e}_A(|\lambda|)\rangle}}=
\frac{-2+A^2}{2+A^2};\;\;\;A=\frac{D}{|\lambda|}+\sqrt {2+\left (\frac {D}{|\lambda|}\right )^2}.
\end{eqnarray}
For fixed $D$ and when $\lambda$ is close to zero, the  $\frac {D}{|\lambda|}$ is large, and the $r(|\lambda|)$ is close to $1$.
It is seen that as $\lambda $ changes from negative to positive values, the $\ket{{\mathfrak e}_A(-|\lambda|)}$ changes 
quickly but smoothly to $\ket{{\mathfrak e}_A(|\lambda|)}$
(the angle between these two vectors is small and decreases gradually as $|\lambda|$ goes near $0$).
There are no discontinuities, in the sense that for
any given value of $r(|\lambda|)$, we can find the value of $\frac{D}{|\lambda|}$ which leads to it.
Therefore in the presence of noise, the method of the eigenvalues and eigenvectors cannot find the abrupt change in the ground state of the 
`real system\rq{}, at $\lambda =0$. 
Instead, it finds rapid but smooth changes of the ground state within the small interval $(-D,D)$, and slow changes in the large region outside it. 

Above we worked with the eigenvalues $e_A(\lambda), e_B(\lambda)$ which are random numbers. An alternative approximative approach, will be to work with their expectation values.
We assume that the average value of the random variables $S, D$ is $0$, and that the 
standard deviation of $D$ is $\sigma$.
If $g(D)$ is a function of $D$, then its expectation value $E[g(D)]$ is given by (e.g., Eq.(5-61) in \cite{PA})
\begin{eqnarray}
E[g(D)]=g(0)+g^{\prime \prime}(0)\frac{\sigma^2}{2}+...
\end{eqnarray}
If we ignore the higher moments, we get
\begin{eqnarray}
E[\sqrt{D^2+2\lambda ^2}]\approx |\lambda|\sqrt{2}+\frac{\sigma ^2}{2^{3/2}|\lambda|}.
\end{eqnarray}
Therefore
\begin{eqnarray}\label{34a}
E[e_A(\lambda)]\approx 1-|\lambda|\sqrt{2}-\frac{\sigma ^2}{2^{3/2}|\lambda|};\;\;\;
E[e_B(\lambda)]\approx 1+|\lambda|\sqrt{2}+\frac{\sigma ^2}{2^{3/2}|\lambda|}.
\end{eqnarray}
This approach also shows that the ground state energy (averaged over noise), is
$E[e_A(\lambda)]$, and as we go from negative to positive values of $\lambda$, 
the ground state changes from  $\ket{{\mathfrak e}_A(-|\lambda|)}$ to $\ket{{\mathfrak e}_A(|\lambda|)}$.
As we explained above (using Eq.(\ref{r})) this is a smooth but quick change of the ground state.

\subsubsection{Hamiltonians with noise at zero temperature: generalized bases approach}
We next use the generalized bases studied in this paper.
We first use the density matrices in Eq.(\ref{sigma}) we find that the $s_\theta(i|\lambda)$
are:
\begin{eqnarray}
&&s_\theta(1|\lambda)=\frac{2}{3}[1+0.825{\mathfrak N}_1+0.175{\mathfrak N}_2-0.050 \lambda]=
\frac{2}{3}[1+{\mathfrak N}_1-0.350D-0.050 \lambda]
\nonumber\\
&&s_\theta(2|\lambda)=\frac{2}{3}[1+0.225{\mathfrak N}_1+0.775{\mathfrak N}_2-0.650 \lambda]=
\frac{2}{3}[1+{\mathfrak N}_1-1.550D-0.650 \lambda]\nonumber\\
&&s_\theta(3|\lambda)=\frac{2}{3}[1+0.450{\mathfrak N}_1+0.550{\mathfrak N}_2+0.700\lambda]=
\frac{2}{3}[1+{\mathfrak N}_1-1.100D+0.700\lambda]\nonumber\\
&&s_\theta(1|\lambda)+s_\theta(2|\lambda)+s_\theta(3|\lambda)=2+{\mathfrak N}_1+{\mathfrak N}_2.
\end{eqnarray}
We assume that $D>0$ and  we find the following comonotonicity intervals (which we give together with the corresponding location indices of $\theta(\lambda)$):
\begin{eqnarray}
&&R_1=\left (-\infty,-2D\right);\;\;\;{\cal L}[\theta(\lambda)]=(2,1,3)\nonumber\\
&&R_2=\left (-2D,-0.333D\right);\;\;\;{\cal L}[\theta(\lambda)]=(1,2,3)\nonumber\\
&&R_3=\left (-0.333D,1.153D\right);\;\;\;{\cal L}[\theta(\lambda)]=(1,3,2)\nonumber\\
&&R_4=\left (1.153D,\infty\right);\;\;\;{\cal L}[\theta(\lambda)]=(3,1,2)
\end{eqnarray}
There are three crossing points near $\lambda =0$ (at $-2D, -0,333D, 1.153D$), which indicate
that drastic physical changes occur in that region.
There are no crossing points far from $\lambda =0$, and this indicates that only mild physical changes occur there.
We note that if we average over the random variable $D$, then we get two comonotonicity regions
$(-\infty, 0)$ and $(0, \infty)$ as in the noiseless case.

We also use the density matrices in Eq.(\ref{sigma2}), and we find that the $s_\theta(i|\lambda)$
are:
\begin{eqnarray}
&&s_\theta(1|\lambda)=\frac{1}{2}[1+0.850{\mathfrak N}_1+0.150{\mathfrak N}_2-0.168 \lambda]
=\frac{1}{2}[1+{\mathfrak N}_1-0.300D-0.168 \lambda]\nonumber\\
&&s_\theta(2|\lambda)=\frac{1}{2}[1+0.316{\mathfrak N}_1+0.684{\mathfrak N}_2-0.700 \lambda]
=\frac{1}{2}[1+{\mathfrak N}_1-1.368D-0.700 \lambda]\nonumber\\
&&s_\theta(3|\lambda)=\frac{1}{2}[1+0.516{\mathfrak N}_1+0.484{\mathfrak N}_2+0.500 \lambda]
=\frac{1}{2}[1+{\mathfrak N}_1-0.968D+0.500 \lambda]\nonumber\\
&&s_\theta(4|\lambda)=\frac{1}{2}[1+0.316{\mathfrak N}_1+0.684{\mathfrak N}_2+0.368 \lambda]
=\frac{1}{2}[1+{\mathfrak N}_1-1.368D+0.368\lambda]\nonumber\\
&&s_\theta(1|\lambda)+s_\theta(2|\lambda)+s_\theta(3|\lambda)+s_\theta(3|\lambda)=2+{\mathfrak N}_1+{\mathfrak N}_2.
\end{eqnarray}

We assume that $D>0$ and  we find the following comonotonicity intervals:
\begin{eqnarray}
&&R_1=\left (-\infty,-3D\right);\;\;\;{\cal L}[\theta(\lambda)]=(2,1,4,3)\nonumber\\
&&R_2=\left (-3D,-2D\right);\;\;\;{\cal L}[\theta(\lambda)]=(2,1,3,4)\nonumber\\
&&R_3=\left (-2D,-0.33D\right);\;\;\;{\cal L}[\theta(\lambda)]=(1,2,3,4)\nonumber\\
&&R_4=\left (-0.33D,0\right);\;\;\;{\cal L}[\theta(\lambda)]=(1,3,2,4)\nonumber\\
&&R_5=\left (0,D\right);\;\;\;{\cal L}[\theta(\lambda)]=(1,3,4,2)\nonumber\\
&&R_6=\left (D,1.99D\right);\;\;\;{\cal L}[\theta(\lambda)]=(3,1,4,2)\nonumber\\
&&R_7=\left (1.99D,\infty\right);\;\;\;{\cal L}[\theta(\lambda)]=(3,4,1,2)
\end{eqnarray}
There are six crossing points near $\lambda =0$ (at $-3D, -2D -0.33D, 0, D, 1.99D$), which indicate
that drastic physical changes occur in that region. The fact that
there are no crossing points far from $\lambda =0$, indicates that only mild physical changes occur there.
This conclusion is the same as the conclusion derived earlier using 
a different generalized basis, and also using eigenvalues and eigenvectors. 
Again, if we average over the random variable $D$, then we get two comonotonicity regions
$(-\infty, 0)$ and $(0, \infty)$ as in the noiseless case.

\subsubsection{Shannon entropy in a generalized basis}

Let
\begin{eqnarray}
{\cal H} (\lambda)=\frac{1}{h_1(\lambda)+h_2(\lambda)}\begin{pmatrix}
h_1(\lambda) &h_3(\lambda)\\
[h_3(\lambda)]^*&h_2(\lambda)\\
\end{pmatrix}
\end{eqnarray}
be a positive semidefinite Hamiltonian, where the $h_1(\lambda), h_2(\lambda)$ are real functions of the coupling parameter $\lambda$, and 
$h_3(\lambda)$ is a complex function of $\lambda$. 
We consider the pseudo-probabilities
\begin{eqnarray}
s _{\cal H} (i|\lambda)=\frac{d}{n}{\rm Tr}[{\cal H}(\lambda)\sigma(i)]
;\;\;\;\sum _{i=1}^ns _{\cal H} (i|\lambda)=1.
\end{eqnarray}
where $\{\sigma (i)\}$ is a generalized basis, 
and the corresponding entropy
\begin{eqnarray}\label{785}
E_n(\lambda)=-\sum _{i=1}^ns_{\cal H}  (i|\lambda)\log [s_{\cal H}  (i|\lambda)].
\end{eqnarray}
We also consider the von Neumann entropy
\begin{eqnarray}\label{568}
E_{vN}(\lambda)&=&-{\rm Tr}[{\cal H} (\lambda)\log {\cal H}(\lambda)].
\end{eqnarray}

\begin{proposition}
A necessary and sufficient condition for the eigenvalues of ${\cal H} (\lambda)$ to be equal to each other (and equal to $1/2$), is that
$h_1(\lambda)=h_2(\lambda)$ and $h_3(\lambda)=0$. If there exists a value $\lambda=\lambda _0$ which satisfies these conditions, then
 the entropies for this Hamiltonian are:
\begin{eqnarray}
 E_n(\lambda _0)=\log n;\;\;\;E_{vN}(\lambda _0)=\log 2.
\end{eqnarray}
\end{proposition}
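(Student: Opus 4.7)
The plan is to reduce the statement to the earlier observation that $E_n\left(\frac{1}{d}\mathbf{1}\right) = \log n$, together with a direct computation of the von Neumann entropy of the maximally mixed state. The core structural fact is that a $2\times 2$ Hermitian matrix with equal eigenvalues must be a scalar multiple of $\mathbf{1}$.

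For the characterization, I would first observe that $\mathrm{Tr}[{\cal H}(\lambda)] = (h_1(\lambda) + h_2(\lambda))/(h_1(\lambda) + h_2(\lambda)) = 1$, so if the two eigenvalues coincide they must both equal $1/2$, forcing ${\cal H}(\lambda) = \tfrac{1}{2}\mathbf{1}$. Reading off matrix entries, equality with $\tfrac{1}{2}\mathbf{1}$ is equivalent to $h_3(\lambda) = 0$ together with $h_1(\lambda)/(h_1(\lambda)+h_2(\lambda)) = 1/2$, i.e.\ $h_1(\lambda) = h_2(\lambda)$. Sufficiency is immediate by substitution: these conditions make the off-diagonal entries vanish and the diagonal entries equal, so ${\cal H}(\lambda_0) = \tfrac{1}{2}\mathbf{1}$.

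For the entropies at such a $\lambda_0$, I would use ${\cal H}(\lambda_0) = \tfrac{1}{d}\mathbf{1}$ with $d=2$. Since $\mathrm{Tr}[\sigma(i)] = 1$ (because $\sigma(i) = \tfrac{n}{d}\tau(i)$ and $\mathrm{Tr}[\tau(i)] = d/n$ by the proposition in Section 4), the pseudo-probabilities evaluate to
\begin{equation*}
s_{\cal H}(i|\lambda_0) = \frac{d}{n}\,\mathrm{Tr}\!\left[\tfrac{1}{d}\mathbf{1}\,\sigma(i)\right] = \frac{1}{n},
\end{equation*}
exactly as already noted in the example for $\rho = \tfrac{1}{d}\mathbf{1}$. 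Substituting into the definition of $E_n$ in Eq.~(\ref{785}) gives $E_n(\lambda_0) = \log n$. For the von Neumann entropy in Eq.~(\ref{568}), both eigenvalues of ${\cal H}(\lambda_0)$ equal $1/2$, so $E_{vN}(\lambda_0) = -2\cdot\tfrac{1}{2}\log\tfrac{1}{2} = \log 2$.

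There is no real obstacle here; the only point requiring slight care is the necessity direction, where one must use that ${\cal H}(\lambda)$ is Hermitian (so a scalar multiple of $\mathbf{1}$ iff both eigenvalues coincide) and that its trace is automatically $1$, to conclude that the common eigenvalue must be $1/2$. Everything else is a direct appeal to the previously established identity $E_n(\tfrac{1}{d}\mathbf{1}) = \log n$ and a one-line computation of the von Neumann entropy.
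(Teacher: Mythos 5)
Your proof is correct and has the same overall structure as the paper's: reduce to ${\cal H}(\lambda_0)=\tfrac{1}{2}{\bf 1}$, then read off $s_{\cal H}(i|\lambda_0)=\tfrac{d}{n}{\rm Tr}[\tfrac{1}{2}\sigma(i)]=\tfrac{1}{n}$ (using ${\rm Tr}[\sigma(i)]=1$) to get $E_n(\lambda_0)=\log n$, and compute $E_{vN}(\lambda_0)=-2\cdot\tfrac{1}{2}\log\tfrac{1}{2}=\log 2$. The only difference is in the necessity direction: the paper writes the characteristic equation $(h_1-\mu)(h_2-\mu)-|h_3|^2=0$ and observes that its discriminant, a sum of the nonnegative terms $(h_1-h_2)^2$ and $|h_3|^2$, vanishes iff $h_1=h_2$ and $h_3=0$, whereas you use ${\rm Tr}[{\cal H}(\lambda)]=1$ together with the fact that a Hermitian matrix with a repeated eigenvalue is a scalar multiple of ${\bf 1}$. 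Both arguments are one-liners and equally valid; your route has the minor advantage of making it explicit why the common eigenvalue must be $1/2$ (the trace), a point the paper states but does not dwell on, while the paper's discriminant computation avoids any appeal to diagonalizability. No gaps.
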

\begin{proof}
The characteristic equation of the matrix ${\cal H}(\lambda)$ is
\begin{eqnarray}
(h_1-\mu)(h_2-\mu)-|h_3|^2=0.
\end{eqnarray}
The discriminant of this equation is
\begin{eqnarray}
\Delta=(h_1-h_2)^2+|h_3|^2.
\end{eqnarray}
The eigenvalues are equal to each other when the discriminant is equal to zero and this gives the conditions $h_1(\lambda)=h_2(\lambda)$ and $h_3(\lambda)=0$. If there exists a value $\lambda=\lambda _0$ which satisfies these conditions, the Hamiltonian at this value is
${\cal H} (\lambda _0)=\frac{1}{2}{\bf 1}$, and therefore 
\begin{eqnarray}
s _{\cal H} (i|\lambda _0)=\frac{2}{n}{\rm Tr}\left [\frac{1}{2}\sigma(i)\right ]=\frac{1}{n}.
\end{eqnarray}
From this follows that  $E_n(\lambda _0)=\log n$.  Also when the eigenvalues are equal to each other (and equal to $1/2$),
then $E_{vN}(\lambda _0)=\log 2$.
\end{proof}

We have explained earlier that when the two eigenvalues are equal to each other, the ground state changes abruptly from one state to another. In the proposition above we have shown that at this point the entropies $E_n(\lambda _0)$ (and also the $E_{vN}(\lambda _0)$) take their maximum values.

We normalize the Hamiltonian $\theta (\lambda)$ and 
also the `noisy Hamiltonian\rq{} $\phi (\lambda)$ in Eqs.(\ref{hami}),(\ref{54}), so that their trace is one:
\begin{eqnarray}\label{540}
{\theta} _1(\lambda)=\frac{{\theta} (\lambda)}{{\rm Tr}[{\theta} (\lambda)]};\;\;\;
{\phi _1} (\lambda)=\frac{\phi (\lambda)}{{\rm Tr}[\phi (\lambda)]}
\end{eqnarray}
We calculated the pseudo-probabilities $s_{\theta _1}(i|\lambda)$, $s_{\phi _1}(i|\lambda)$ for values of $\lambda$ close to zero so that these operators are positive semidefinite. 
We then calculated the  entropy $E_n $ with the generalized basis in Eq.(\ref{sigma}), and also  with the generalized basis in Eq.(\ref{sigma2})
(we denote them $E_3, E_4$ for the noiseless normalized Hamiltonian ${\theta} _1(\lambda)$, and $E_3^{\rm noise}, E_4^{\rm noise}$ 
for the noisy normalized Hamiltonian ${\phi _1} (\lambda)$, correspondingly).

We also calculated the von Neumann entropy $E_{vN}(\lambda)$ and $E_{vN}^{\rm noise}(\lambda)$, for ${\theta} _1(\lambda)$ and
${\phi _1} (\lambda)$, correspondingly.
There is an exact symmetry $E_{vN}(-\lambda)=E_{vN}(\lambda)$ for the von Neumann entropy.
For the entropy in Eq.(\ref{785}), there is an approximate symmetry $E _n(-\lambda)\approx E _n(\lambda)$, for small values of $\lambda$.

In table \ref{t2} we give the von Neumann entropy $E_{vN}/\log 2$, and the entropies $E_3/\log3$ and $E_4/\log4$, for various values of $\lambda$. 
We also give the quantities
\begin{eqnarray}\label{123}
\frac{E_{vN}-E_{vN}^{\rm noise}}{E_{vN}};\;\;\;\frac{E_3-E_3^{\rm noise}}{E_3};\;\;\;\frac{E_4-E_4^{\rm noise}}{E_4}.
\end{eqnarray}
It is seen that the entropies $E_n$ are more robust in the presence of noise, than the von Neumann entropy $E_{vN}(\lambda)$.
For the amounts of noise that we used, the von Neumann entropy has error of approximately $9\%$, and the other entropies have error less than $1\%$. 
We note that in the example that we considered, all quantities in Eq.(\ref{123}) take positive values.
This is because noise makes the eigenvalues more unequal (see Eq.(\ref{321})) and this decreases the entropy.

We conclude that the entropies associated with our generalized bases, are more robust in the presence of noise than 
the entropies associated with orthonormal bases.

\section{Discussion}

We introduced redundancy into the concept of basis in a $d$-dimensional Hilbert space.
We started with a total set of $n>d$ vectors, and renormalized it into a 
a generalized basis, which consists of $n$ density matrices that resolve the identity.
The renormalization formalism uses M\"obius operators, and is inspired by the Shapley methodology in cooperative game theory, 
as discussed in \cite{V2017} for the special case 
of $n=d^2$ coherent states. In the present paper we use an arbitrary $n$ in the region $d<n<d^2$.
The non-independence and redundancy in a generalized basis, is quantified with a Shannon type of entropy which takes values in the 
interval $(\log n-\log d, \log n)$.

We have shown that the merit of calculations in a generalized basis, is that the results are sensitive to physical changes and robust in the presence of noise. These two requirements may appear to be contradictive, but they are not, because noise affects the whole basis in an 
almost equal way, while physical changes affect some parts of the basis more than others.   
We have shown with examples, that
addition of noise in the coefficients of a vector in a generalized basis, does not change the vector significantly.

We have also applied the formalism to the study of the ground state of a system with the Hamiltonian in Eq.(\ref{hami}), which
is frequently used as an approximation to an infinite-dimensional system, operating in the subspace of the lowest two states.
The concepts `location index with respect to a generalized basis',  
and `comonotonicity intervals of the coupling parameter', have been used to detect 
drastic changes in the ground state of the system, as the coupling parameter changes.
It has been shown that the method is robust in the presence of noise.

The work extends the area of coherent states, POVMs and frames and wavelets, in a new direction.
It starts from any total set of $n>d$ vectors, and leads to $n$ mixed states that resolve the identity.
The method has been used only with finite-dimensional Hilbert spaces. 
However cooperative game theory, is also applied to a continuum of players (e.g. \cite {AS}), and this could be used to extend our 
methodology to infinite-dimensional Hilbert spaces. 
In this case the sums contain an infinite number of terms, and the challenge is to ensure that they converge.
 
We note that the present paper is not related to work on quantum game theory, which is game theory with the superposition principle.
Here we use the mathematical methodology of Shapley in cooperative game theory, to renormalize the vectors in a total set, into density matrices that resolve the identity.

\newpage
\begin{table}
\caption{The vector $\ket{V}$ in Eq.(\ref{44}), is represented with 3,4,2 component vectors, using the generalized bases in Eqs. (\ref{sigma}), (\ref{sigma2})
and the orthonormal basis in Eq.(\ref{E1}), correspondingly. Random numbers 
(uniformly distributed in the interval $[-0.5, 0.5]$) are added to these components as in Eq.(\ref{478}), and approximations to $\ket{V}$ are calculated. The corresponding errors $\epsilon _3$, $\epsilon _4$, $\epsilon _{\rm orth}$ are shown. 
Their diagonal parts ($\epsilon_{3D}$, $\epsilon_{4D}$) and non-diagonal parts ($\epsilon_{3ND}$, $\epsilon_{4ND}$) are also shown.
The calculation has been repeated five times, with different sets of random numbers.}
\def\arraystretch{3}
\begin{tabular}{|c|c|c|c|c|c|c|}\hline
$\epsilon _3$&$\epsilon_{3D}$&$\epsilon _{3ND}$&$\epsilon _4$&$\epsilon_{4D}$&$\epsilon _{4ND}$&$\epsilon _{\rm orth}$\\\hline
$0.212$&$0.058$&$-0.013$&$0.296$&$0.049$&$0.038$&$0.310$\\\hline
$0.245$&$0.075$&$-0.015$&$0.144$&$0.021$&$0$&$0.340$\\\hline
$0.181$&$0.032$&$0$&$0.088$&$0.025$&$-0.017$&$0.233$\\\hline
$0.187$&$0.026$&$0.008$&$0.204$&$0.018$&$0.022$&$0.383$\\\hline
$0.143$&$0.051$&$-0.030$&$0.066$&$0.019$&$-0.015$&$0.347$\\\hline
\end{tabular} \label{t1}
\end{table}

\begin{table}
\caption{Various entropies for  the Hamiltonians $\theta _1(\lambda)$ and
${\phi _1} (\lambda)$ in Eq.(\ref{540}) (the entropies in the latter case have the superfix `noise\rq{}).
$E_{vN}$ is the von Neumann entropy, $E_3$ is the entropy with respect to the generalized basis in Eq.(\ref{sigma}) , 
and $E_4$ is the entropy with respect to the generalized basis in Eq.(\ref{sigma2}).}
\def\arraystretch{2.5}
\setlength{\tabcolsep}{8pt}
\begin{tabular}{|c|c|c|c|c|c|c|}\hline
$\lambda$&$E_{vN}/\log2$&$E_3/\log3$&$E_4/\log4$&$\frac{E_{vN}-E_{vN}^{\rm noise}}{E_{vN}}$&$\frac{E_3-E_3^{\rm noise}}{E_3}$&
$\frac{E_4-E_4^{\rm noise}}{E_4}$\\\hline
$-0.4$&$0.754$&$0.977$&$0.987$&$0.098$&$0.019$&$0.007$\\\hline
$-0.3$&$0.866$&$0.987$&$0.992$&$0.094$&$0.019$&$0.007$\\\hline
$-0.2$&$0.941$&$0.994$&$0.996$&$0.092$&$0.018$&$0.008$\\\hline
$-0.1$&$0.985$&$0.998$&$0.999$&$0.091$&$0.016$&$0.009$\\\hline
$0$&$1$&$1$&$1$&$0.091$&$0.015$&$0.009$\\\hline
$0.1$&$0.985$&$0.998$&$0.999$&$0.091$&$0.011$&$0.009$\\\hline
$0.2$&$0.941$&$0.994$&$0.996$&$0.092$&$0.009$&$0.008$\\\hline
$0.3$&$0.866$&$0.987$&$0.992$&$0.094$&$0.005$&$0.008$\\\hline
$0.4$&$0.754$&$0.977$&$0.986$&$0.098$&$0$&$0.006$\\\hline
\end{tabular} \label{t2}
\end{table}

\end{document}